\documentclass[10pt]{article}

\newenvironment{packed_item}{
\begin{itemize}
  \setlength{\itemsep}{0.5pt}
  \setlength{\parskip}{0pt}
  \setlength{\parsep}{0pt}
}{\end{itemize}}

\newenvironment{packed_enum}{
\begin{enumerate}
  \setlength{\itemsep}{0.5pt}
  \setlength{\parskip}{0pt}
  \setlength{\parsep}{0pt}
}{\end{enumerate}}

\usepackage[margin=1in]{geometry}
\usepackage{appendix}
\usepackage{amsfonts,amsmath,amssymb,amsthm}
\usepackage{url}
\usepackage{algorithm,algorithmic}
\usepackage{etoolbox}
\usepackage{graphicx,psfrag}
\usepackage[usenames,dvipsnames]{color}

 \newtheorem{definition}{Definition}
 \newtheorem{theorem}{Theorem}
 \newtheorem{lemma}{Lemma}
 
 \newtheorem{corollary}{Corollary}

\newcommand{\Alg}{\mathcal{A}}
\newcommand{\eps}{\varepsilon}
\newcommand{\tree}{\mathcal{T}}
\newcommand{\E}{\mathbb{E}}
\renewcommand{\vec}[1]{\mathbf{#1}}
\newcommand{\uni}{\mathcal{U}}

\newcommand{\junk}[1]{}

\definecolor{red}{RGB}{255,0,0}
\newcommand{\red}{\textcolor{red}}
\definecolor{green}{RGB}{0,255,0}
\newcommand{\green}{\textcolor{green}}
\definecolor{blue}{RGB}{0,0,255}
\newcommand{\blue}{\textcolor{blue}}

\DeclareMathOperator{\Lap}{Lap}

\begin{document}
\title{Private Sums on Decayed Streams}
\author{Jean Bolot \and Nadia Fawaz \and
  S. Muthukrishnan \and Aleksandar Nikolov \and Nina
  Taft}

\maketitle

\begin{abstract}
  In  monitoring applications, recent data is more important than distant data. How does this affect privacy of data analysis? We study a general class of data analyses --- computing predicate sums ---  with privacy.  Formally, we study the problem of estimating predicate sums {\em privately}, for sliding windows (and other well-known decay models of data, i.e.~exponential and polynomial decay). We extend the recently proposed continual privacy model of Dwork et al.~\cite{dwork-continual}.
We present algorithms for decayed sum which are $\eps$-differentially private, and are accurate. For window and exponential decay sums, our algorithms are accurate up to additive $1/\eps$ and polylog terms in the range of the computed function; for polynomial decay sums which are technically more challenging because partial solutions do not compose easily, our algorithms incur additional relative error. Further, we show lower bounds, tight within polylog factors and tight with respect to the dependence on the probability of error. 

\end{abstract}

\section{Introduction}

Any nontrivial physical, hardware or software system has a dashboard continually observing the system variables, and updating various measurements.  In such applications, data arrives over time, and we need to continually output the result of some analysis $f$ 
for each time instant $j$ on all data seen thus far.  This challenges privacy of analysis because the same function is computed on several deltas of the data and the collection of these function values can potentially leak information.
Recently, the notion of differential privacy was adopted to address this challenge~\cite{dwork-continual,chan2010private}, and we extend that study.

~\cite{dwork-continual,chan2010private} identified the problem of computing the running sum of a series of $0/1$ updates as an important technical primitive, formulated differential privacy of computing these running sums, and presented upper and lower bounds on accuracy of $\eps$-differentially private algorithms for computing running sums. They showed that an additive accuracy of $O(\frac{1}{\eps} \log^{2}  T)$ with constant probability is possible for the running sums problem, and that $\Omega(\log T)$ additive error was necessary to answer privately all running sum queries for all time steps $j \in [1, T]$. 

\medskip
\noindent
{\bf Power of Running Sums.} The sums problem 
captures many analyses by  applying a suitable predicate  to the data items that map them to $0/1$. 
For example, at time $j$, say data item $D_j=(u_j,m_j)$ is the user ID $u_j$ and the name of the movie $m_j$ watched in an online service by $u_j$ at that time. A natural predicate is ${\cal P}_m(D_j)=1$ if $m_j=m$ and $0$ otherwise; the running sum with this predicate counts the number of user IDs that watched a particular film $m$. Another natural predicate is 
${\cal P}_u(D_j)=1$ if $u_j=u$ and $0$ otherwise; this running sum counts the number of movies watched by a user $u$.  
The predicates can be different for different items. E.g., 
${\cal P}_{j,u}(D_j)=1$ if $u_j=u$ and $j \in [9,17]$ will filter movies watched by user $u$ during business hours $9$ AM to $5$ PM.  Even more generally, ${\cal P}$ may be 
a machine learning based classification routine such as whether a
click by any user from a certain IP address on an Internet ad is a
spam or not, and the running sum will count the total number of spam clicks from the given IP address. \hfill$\blacksquare$

\medskip
Our point of departure from prior work is that in reality, monitoring applications emphasize recent data more than data long past.  For example, monitoring applications typically consider a  ``window''' of continual observations such as, last $T$ time units, or last $W$ updates.  More generally, they discount items based on how far they are in the past, and analyze decayed data.
The commonly useful decay models  are exponential and polynomial 
decays~\cite{datar2002maintaining,cohen2003maintaining}. 

\junk{
and algorithms for various data stream decay  models~\cite{cohen2003maintaining}. See the papers on data streaming over sliding windows~\cite{datar2002maintaining} for motivation and applications of the window model and~\cite{cohen2003maintaining} for motivation of the exponential decay model as well as discussion of the usefulness and motivation of considering a polynomial decay model, as well as applications for both models.
}

\medskip
\noindent
{\bf Our results.}
Motivated by this, we consider differential privacy of continual observations over windows and decayed data.  
\junk{
\begin{packed_item}
\item \noindent
{\em What is the desired accuracy of analyses?} We wish accuracy of analysis not with respect to the entire stream of updates thus far, but only with respect to the window of our interest. More generally, we wish our analyses to be accurate with respect to the decayed sums, and not running 
sums without discounting the past.

\item \noindent
{\em What should be the privacy guarantee?} This is tricky.  The immediate instinct is to define 
differential privacy using functions over the window, that is, for a function $f$ over the data items $D_{j-W+1},\ldots,D_j$, require that we compute a differentially private approximation $\tilde{f}$. 
Unfortunately this is not sufficient. For every such $\tilde{f}$, there exists  a $\tilde{f}^*$ which satisfies the notion of differential privacy over the window, but completely compromises all the prior 
data.  For example, one can prove that such an $\tilde{f}^*$  is given
by $\tilde{f}^*(.)=\tilde{f}.D_1D_2\cdots D_{j-W}$. 
Therefore, we have to require differential privacy explicitly over all prior data. This argument holds for other decay models as well. 
\end{packed_item}

\junk{
To sum, we still need differential privacy over the entire history of data, but simultaneously require accuracy over the window or decayed data, which is the challenge.  
Drawing an analogy with the data streams literature, the window stream model lies between increment-only updates  where all data seen thus far is considered, and fully dynamic updates where updates are comprised of arbitrary inserts and deletes, since a shifting window may be thought of as adding a data item on  the ``right'' and deleting a specific item, the one on the  ``left''. It is known in streaming that certain problems that cannot be solved with fully dynamic data can be solved on window streams~\cite{datar2002maintaining}. A similar issue arises with differential privacy, too: do the window or decayed functions, which are non-monotonic in a specific way, lie between the monotonic functions studied thus far (like running sums and the general transformation proposed in~\cite{dwork-continual}) for which differentially private and accurate solutions are possible, and arbitrary non-monotonic functions where such solutions are not known?  So, in addition to the applications perspective,  differentially private analysis of a window of continual observations  is interesting from a technical perspective as well.
}

To sum, we still need differential privacy over the entire history of data, but simultaneously require accuracy over the window or decayed data.  
We study the predicate sums problem from this perspective.
}
At each time step $i$ the algorithm receives a bit $x_i$; at each time
step $j$, the algorithm is required to report an approximation
$\hat{F}(x_1, \ldots, x_{j})$ to a function $F(x_1, \ldots, x_{j})$
and be $\eps$-differentially private over the entire data seen thus far. We use the notion of $(\delta,
\gamma)$-utility, satisfied by algorithms that at any time step $j$
output a value $\hat{F}(x_1, \ldots, x_j)$ which is within $\delta$
absolute error from $F(x_1, \ldots, x_j)$ with probability
$1-\gamma$. Below we summarize our results for sufficiently small $\gamma$ (results for larger $\gamma$ can be found in the body of the paper):
\begin{itemize}
  \setlength{\itemsep}{0.5pt}
  \setlength{\parskip}{0pt}
  \setlength{\parsep}{0pt}
\item ({\em Window Sum})
  The \emph{window sum} problem with window
  size $W$ requires estimating $F_w(j, W) = \sum_{i = j - W +
    1}^j{x_i}$ for each $j$.  Further, the whole sequence $F_w$ of
  outputs, for all $j$, should be $\eps$-differentially private.
  
  \smallskip
We present an algorithm that achieves $(\delta, \gamma)$-utility for
$\delta = O(\frac{1}{\eps}\log W \log \frac{1}{\gamma})$ (in the regime
$\log W \geq \log \frac{1}{\gamma}$ ). 
While a window sum can be reduced to computing the
difference of two running sums, existing running sum
algorithms~\cite{dwork-continual,chan2010private} achieve error
$\delta = \Theta(\frac{1}{\eps}\log T\log \frac{1}{\gamma})$, which can be much
larger than the range $W$ of $F_w$, and therefore, as bad as the
trivial algorithm that outputs a fixed value independently of the input.

\smallskip
We also present a lower bound of $\Omega(\min\{W/2, \frac{1}{\eps}\log
\frac{1}{\gamma}\})$. Note that the dependence on the error probability
$\gamma$ is optimal. The $W/2$ term in the lower bound is unavoidable,
as the trivial algorithm which outputs $W/2$ at every time step
achieves additive approximation $W/2$ and is perfectly private. This
lower bound generalizes a previous lower for the running sum
problem~\cite{dwork-continual}. 

\smallskip
\item ({\em Exponential Decay}) The \emph{exponential decay sum}
  problem is to estimate $ F_e(j, \alpha) = \sum_{i = 1}^j{x_i
    \alpha^{j-i}}$ accurately, while the whole sequence $F_e$ of
  outputs, for all $j$, should be $\eps$-differentially private.

\smallskip
  We present an algorithm that achieves $(\delta, \gamma)$-utility
  with   $\delta = O(\frac{1}{\eps}\log \frac{\alpha}{1-\alpha} \log
  \frac{1}{\gamma})$.  We also  present a lower bound of
  $\Omega\left(\min\left\{\frac{\alpha}{1-\alpha}, \frac{\log
        (1/\gamma)}{\eps}\right\}\right)$. Once again, the dependence
  on the error probability $\gamma$ is optimal. Unlike $F_w$, $F_e$ at each time step
  depends on the entire sequence of updates; nevertheless, our
  algorithm achieves bounded error, polylogarithmic in the range of $F_e$.

\smallskip
\item ({\em Polynomial Decay}) The \emph{polynomial decay sum} problem
  is to estimate $ F_p(j, c) = \sum_{i = 1}^j{\frac{x_i}{(j-i+1)^c}}$
  accurately, while the whole sequence $F_p$ of outputs, for all $j$,
  should be $\eps$-differentially private.

\smallskip
  We present an algorithm that for each $j$ returns $(1\pm\beta) F_p(j,c) \pm
  \left(\frac{1}{c\beta^2}\log\frac{1}{1-\beta}\right)\log
  \frac{1}{\gamma}$ with probability $1-\gamma$. We also present a lower
  bound of $\Omega\left(1 - \frac{\eps^{c-1}}{\log^{c-1}
      (1/\gamma)}\right)$ against {\em purely} additive error. Polynomial decay presents a greater challenge than window sums or exponential decay since there is no direct way to combine a polynomial decay sum over an
interval $[a, b]$ and 
$[b, c]$ into a polynomial decay sum over $[a, c]$. We develop a general technique that works on a large class of decay sum functions (including polynomial decay) and reduces the problem of estimating the decay sum to keeping multiple window sums in parallel. The technique results in a bi-criteria approximation, because of which our lower and upper bounds are incomparable for this problem. 
      
\end{itemize}

In comparison with the simple randomized response strategy~\cite{rr} (i.e.~with probability $1/2 - \eps/2$ change update $x_i$ to $1-x_i$ and keep exact statistics of the changed input), our algorithms achieve exponentially smaller additive error: randomized response leads to estimators with standard deviation proportional to the energy of the decay function, while our estimators have standard deviation polylogarithmic in the energy. Technically, 
\begin{packed_item}
\item
Our algorithms keep dyadic tree data structures as is natural and also used in~\cite{dwork-continual,chan2010private} and elsewhere. However, in order to provide estimates with error polylogarithmic in the range of the decay function, we need to treat the dyadic tree data structure in non-uniform manner: either adding different noise at different nodes, or weighing the contribution of an update to different nodes differently, which is our technical contribution.
\item
We derive all our lower bounds from a common framework, that is inspired by work on differentially private combinatorial optimization. This extends prior work in two ways:  they apply to decay sum
problems that have not been considered before, and they apply against the
weaker $(\delta, \gamma)$-utility  guarantee  (rather than requiring
that all queries are accurate, as in~\cite{dwork-continual}. 
\end{packed_item}



\junk{
Differential privacy under continual updatesof data stream is an exciting research
direction, and the study of other decayed functions will be of great
interest. In particular, user-level privacy and sketches over windows
are  interesting problems for future study.
}

\noindent
\textbf{Detailed discussion of prior work.} The problem of tracking
statistics on dynamic data while preserving privacy under continual
observation is introduced in~\cite{dwork-continual} 
 (a preliminary version was presented in an invited
talk by Dwork~\cite{cynthia}).
In~\cite{dwork-continual}, an algorithm private under continual
observation is presented for the running sum problem. For any fixed
time step, their algorithm achieves additive error of
$O(\frac{1}{\eps}\log^{1.5} T)$ with constant probability, where $T$
is an upper bound on the maximum size of the input, 
known to the algorithm. \junk{This algorithm, similarly to our
work, also uses a dyadic tree datastructure but in a slightly
different way: the dyadic tree is used only to store noise variables
rather than noisy counters.} 
Independently, ~\cite{chan2010private} presented a continually
private algorithm for the running sum problem that at any step $j$, 
guarantees an additive error of $O(\frac{1}{\eps}\log^{1.5} j)$
with constant probability. This matches~\cite{dwork-continual}, while not knowing $T$. 

The algorithm of ~\cite{chan2010private} is  related
to our work: our algorithm for window sum reduces
to their algorithm for running sum when the size of the window
coincides with the size of the input. However, if their algorithm or
the algorithm in~\cite{dwork-continual,song-full} is used directly to compute
window sums, then the error at time $j$ will be on the order of
$\log^{1.5}j$ and for large $j$ will overcome the window size.
Further, algorithms in~\cite{chan2010private,dwork-continual,song-full} do not work for
decayed sums.  

\junk{The
same holds for the algorithm for range queries given in the full
version of the Chan, Shi, and Song paper~\cite{song-full}.  There
is no known continually private algorithm for the running sum problem
that outputs an estimate at all time steps and has bounded error at
any fixed time step $j$ with constant probability, where the bound on
the error is independent of $j$. The basic randomized response
technique, on the other hand, does provide error guarantees for
estimating the window sum (or other decay sum functions) at time $j$
that are independent of $j$. However, these guarantees are only on the
order of the square root of the range of the window sum. By contrast,
our work shows that for decay sum problems, continually private
algorithms can be designed so that the error at any time step $j$ is
bounded, polylogarithmic in the range of the decay sum function, and
independent of $j$. Moreover, we show how to compute private estimates
simultaneously for all window sizes while the estimate for any
particular window size $W$ is accurate to within a polylogarithmic
factor of \emph{$W$}. This is a major difference from the range query
algorithm in~\cite{song-full}, which gives the same accuracy
guarantees for very small and very large windows. In order to achieve
nonuniform guarantees for different window sizes, as well as to design
accurate continually private algorithms for exponential and polynomial
decay sums,}

\cite{dwork-continual} shows how to transform a  private {\em streaming} algorithm that satisfies a monotonicity
property to a  private, continual algorithm.
 \junk{(the transformation works in the stronger user-level
privacy model as long as the base algorithm is private in the
user-level model). The monotonicity property required for their
transformation is that when the base algorithm is run on prefixes of
the input of different lengths, with high probability the output of
the algorithm does not change by a large additive quantity too many
times.} 
However, estimating decayed sums does not have the monotonicity property. 
 \junk{Take for example the
window sum problem: for a window of size $W$, the input can alternate
series of $W$ 1's with series of $W$ 0's. The output of an accurate
algorithm on such input will vary from close to $W$ to close to $0$,
and on input of size $T$, the algorithm's output will oscillate
between both extremes $2T/W$ times. The general transformation
of~\cite{dwork-continual} does not provide nontrivial guarantees for
such algorithms: given an algorithm that changes its output by $d$ at
most $k$ times, the general transformation yields an algorithm with
additive error factor that depends linearly on $k$ and $d$ (and grows
as $\log^3 T$ for inputs of size
$T$).}\junk{ Furthermore, \cite{dwork-continual} shows that functions that
change their value by $d$ at least $k$ times for some input and
  don't change their value on update $0$ cannot be approximated to
within an additive factor smaller than $O(kd/n)$ while satisfying privacy
under continual observation in the stronger user-level privacy model ($n$ is the number of users). In contrast, our work shows that decay
sum functions, which do change their value on update $0$, can be well
approximated in the event-level continual privacy model despite their inherent
non-monotonicity. It is an open problem to extend the bound of \cite{dwork-continual} to a nontrivial lower bound for the event-level model.}\junk{Note that the method~\cite{dwork-continual} use to
show this last lower bound fits within our framework for showing lower
bounds for sum problems in the continual privacy
model. In~\cite{dwork-continual}, a lower bound of $\log T$ was also
proved for the additive error of a continually private algorithm that
is required to be accurate at all time steps simultaneously. }

\section{Notation and Preliminaries}

\noindent
{\bf Online Data Model}
We consider online problems with binary input: at each time step $i$
the algorithm receives input $x_i \in [a, b]$; and is required to report an
approximation $\hat{F}(x_1, \ldots, x_{i})$ to a function $F(x_1,
\ldots, x_{i})$. We present oue upper bounds for $a = 0, b =
1$. For general $a, b$,  our absolute error bounds scale linearly in
$b - a$.

\medskip
\noindent
{\bf Decayed Sum Problems}
The functions $F$ we are interested in approximating are \emph{decayed
  sum} functions. Consider a non-increasing function $g: \mathbb{N}
\rightarrow \mathbb{R}^+$ such that $g(0) = 1$. The \emph{decayed sum
  induced by $g$} is the function
  $F(j) = F(x_1, \ldots, x_j) = \sum_{i = 1}^j{x_i g(j - i)}$.
I.e., $F$ is the convolution of the input
and a
non-increasing function $g$. The decayed sum problems we 
consider are defined below
\junk{$F_w(j, W)$, $F_e(j, \alpha)$, and $F_p(j, c)$ as defined
  earlier.}

\begin{itemize}
\item when $g(i) = 1 \forall i$, the \emph{running sum} problem
  (considered in \cite{chan2010private,cynthia,dwork-continual}):
    $F_s(j) = \sum_{i = 1}^j{x_i}$.

\item when $g(i) = \mathbf{1}_{\{i < W\}}$, the \emph{window sum} problem
  (with window size $W$):
    $F_w(j, W) = \sum_{i = j - W + 1}^j{x_i}$.
  To simplify notation, in the above definition we assume that $x_i =
  0$ for all $i \leq 0$.

\item when $g(i) = \alpha^i$ ($\alpha < 1$), the \emph{exponential
    decay sum} problem:
  $F_e(j, \alpha) = \sum_{i = 1}^j{x_i \alpha^{j-i}}$.

\item when $g(i) = (i+1)^{-c}$ ($c > 1$), the \emph{polynomial decay sum}
  problem: 
    $F_p(j, c) = \sum_{i = 1}^j{\frac{x_i}{(j-i+1)^c}}$.
\end{itemize}
The last three problems have not been considered in the differential privacy
literature before, and specifically not in the continual observation
model. The problems of keeping event counts and other statistics over
windows~\cite{datar2002maintaining} and keeping decayed (in particular
exponential and polynomial decay) sums~\cite{cohen2003maintaining} have
been studied in the field of small space streaming algorithms.

\medskip
\noindent
{\bf Differential Privacy}
We use the standard definition of differential privacy, applied to the
online data model:
\begin{definition}[\cite{DMNS,dwork-continual}]
  Let $\Alg$ be a randomized online algorithm that at time step $j$
  outputs $\hat{F}(x_1, \ldots, x_j) \in \mathbb{R}$. $\Alg$ satisfies
  \emph{$\eps$-differential privacy} if for all $T \in \mathbb{Z}$,
  for all measurable subsets $S  \subseteq \mathbb{R}^T$,  all
  possible inputs $x_1, \ldots, x_T$, all $j$ and all $x'_j$ (where probability is over the coin throws of $\Alg$)
  \begin{equation*}
    \Pr[(\hat{F}(x_1, \ldots,x_j,\ldots,  x_k))_{k = 1}^T \in S] \leq e^\eps
    \Pr[(\hat{F}(x_1, \ldots, x'_j,\ldots,  x_k))_{k = 1}^T \in S].
  \end{equation*}
\end{definition}
This is the basic definition of differential privacy as
in~\cite{DMNS}, but with the modification that the algorithm receives
the input online and produces output at every step, and the whole sequence
of outputs is available to an adversary. This model of privacy for
online algorithms operating on time series data, termed
\emph{privacy under continual observation}, was introduced
by~\cite{cynthia,dwork-continual}.

We use the following basic facts about differential privacy. The
first theorem gives a simple way to achieve differential privacy for
algorithms with numerical output, based on adding random noise scaled
according to the sensitivity of the statistic being computed. The second fact is that composing multiple privacy mechanisms results
in smooth privacy loss.
\begin{theorem}[\cite{DMNS}]
  \label{thm:laplace}
  For a function $F: [a, b]^T \rightarrow \mathbb{R}^d$, let the
  \emph{sensitivity} of $F$, $S_F$ be the smallest real number that
  satisfies $  \forall x_1, \ldots, x_T, \forall j \in [T], \forall x'_j \in [a,
  b]:$
\begin{align*}
  &\|F(x_1, \ldots,x_j,\ldots,  x_T) - F(x_1, \ldots, x'_j,\ldots,  x_T)\|_1
  \leq S_F
\end{align*}
Then an algorithm that on input $x_1, \ldots, x_T$ outputs
$\hat{F}(x_1, \ldots, x_T) = F(x_1, \ldots, x_T) + \Lap(S_F/\eps)^d$
satisfies $\eps$-differential privacy, where $\Lap(\lambda)^d$ is a
sample of $d$ independent Laplace random variables with mean 0 and
scale parameter $\lambda$.
\end{theorem}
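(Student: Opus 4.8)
The plan is to reduce the stated privacy guarantee to a pointwise comparison of output densities for neighboring inputs, and then integrate. Fix $T$, fix inputs $x_1,\dots,x_T$, an index $j$, and an alternative value $x'_j \in [a,b]$; write $x = (x_1,\dots,x_j,\dots,x_T)$ and $x' = (x_1,\dots,x'_j,\dots,x_T)$. Applying the theorem in the continual-observation setting amounts to taking $F$ to be the map sending the input to the entire vector of answers, so that $d = T$ and $S \subseteq \mathbb{R}^d$. The output $\hat F(x) = F(x) + \Lap(S_F/\eps)^d$ is an absolutely continuous random vector whose density, by independence of the $d$ Laplace coordinates, is $p_x(y) = \prod_{i=1}^d \frac{\eps}{2 S_F}\exp\!\left(-\tfrac{\eps}{S_F}\,|y_i - F(x)_i|\right)$, and likewise $p_{x'}$ with $F(x')$ in place of $F(x)$. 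Both densities are strictly positive everywhere, so the ratio $p_x(y)/p_{x'}(y)$ is well defined for all $y \in \mathbb{R}^d$.

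The core step is to bound this ratio uniformly. For every $y$,
\[
\frac{p_x(y)}{p_{x'}(y)} = \exp\!\left(\frac{\eps}{S_F}\sum_{i=1}^d \bigl(|y_i - F(x')_i| - |y_i - F(x)_i|\bigr)\right) \le \exp\!\left(\frac{\eps}{S_F}\sum_{i=1}^d |F(x)_i - F(x')_i|\right) = \exp\!\left(\frac{\eps\,\|F(x) - F(x')\|_1}{S_F}\right),
\]
where the inequality is the triangle inequality $|a| - |b| \le |a - b|$ applied coordinatewise. By the definition of $S_F$ as a valid sensitivity bound, $\|F(x) - F(x')\|_1 \le S_F$, so the ratio is at most $e^\eps$. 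Integrating this pointwise bound over an arbitrary measurable $S$ gives $\Pr[\hat F(x) \in S] = \int_S p_x(y)\,dy \le e^\eps \int_S p_{x'}(y)\,dy = e^\eps \Pr[\hat F(x') \in S]$, which is exactly the $\eps$-differential privacy inequality; since $T$, the inputs, $j$, and $x'_j$ were arbitrary, the algorithm is $\eps$-differentially private.

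There is no deep obstacle here — the argument is the standard Laplace-mechanism proof — but two points deserve care rather than being glossed over. First, a single change $x_j \to x'_j$ may perturb many output coordinates simultaneously (in the continual model it shifts the reported estimate at every time step $\ge j$), and it is precisely the product form of the Laplace density combined with the coordinatewise triangle inequality that turns a bound on the aggregate $\ell_1$ perturbation into a bound on the density ratio; this is why the $\ell_1$ sensitivity, rather than a per-coordinate bound, is the right quantity. Second, one should check that $S_F$, defined as the least valid bound, still dominates $\|F(x) - F(x')\|_1$ for every neighboring pair, which it does by definition, so the density-ratio bound holds uniformly and the integration step is legitimate.
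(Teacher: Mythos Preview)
Your argument is correct and is precisely the standard Laplace-mechanism proof: bound the ratio of output densities pointwise via the triangle inequality and the $\ell_1$-sensitivity, then integrate. The paper does not supply its own proof of this statement; it is quoted as a known result from~\cite{DMNS}, and your write-up matches the classical argument from that reference.
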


\begin{theorem}[\cite{DMNS}]
  \label{thm:composition}
  Let algorithm $\Alg_1$ satisfy $\eps_1$-differential privacy and
  algorithm $\Alg_2$ satisfy $\eps_2$-differential privacy. Then an
  algorithm $\Alg$ that on input $\vec{x} = \{x_1, \ldots, x_T\}$
  outputs $\Alg(\Alg_1(\vec{x}), \Alg_2(\vec{x}))$ satisfies $(\eps_1
  + \eps_2)$-differential privacy.
\end{theorem}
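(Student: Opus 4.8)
The plan is to split the statement into two standard pieces: (a) running $\Alg_1$ and $\Alg_2$ with independent coins on a common input is itself $(\eps_1+\eps_2)$-differentially private, viewed as the single ``pair mechanism'' $\Alg'$ with $\Alg'(\vec{x}) = (\Alg_1(\vec{x}), \Alg_2(\vec{x}))$; and (b) differential privacy is closed under randomized post-processing whose description does not depend on the input. Since $\Alg$ produces its output by feeding the pair $(\Alg_1(\vec{x}), \Alg_2(\vec{x}))$ into a rule that is fixed in advance, the theorem is precisely (b) applied to the output of (a).

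For (a), I would fix an input $\vec{x}$, an index $j$, and a replacement value $x_j'$, writing $\vec{x}'$ for the modified input; by the definition it suffices to establish the one-coordinate likelihood-ratio bound for $\Alg'$. Let $\mu_{\vec{x}}$ and $\nu_{\vec{x}}$ be the laws of $\Alg_1(\vec{x})$ and $\Alg_2(\vec{x})$. Because the two algorithms use independent randomness, $\Alg'(\vec{x})$ has law $\mu_{\vec{x}} \times \nu_{\vec{x}}$. For a measurable set $S$ in the product of the two output spaces, Fubini's theorem gives
\[
  (\mu_{\vec{x}} \times \nu_{\vec{x}})(S) = \int \nu_{\vec{x}}(S_{r})\, d\mu_{\vec{x}}(r),
\]
where $S_{r} = \{ s : (r,s) \in S\}$ is the section at $r$. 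Applying $\eps_2$-privacy of $\Alg_2$ sectionwise, $\nu_{\vec{x}}(S_r) \le e^{\eps_2}\nu_{\vec{x}'}(S_r)$; then integrating against $\mu_{\vec{x}}$ and invoking $\eps_1$-privacy of $\Alg_1$ --- via the elementary fact that $\mu(A) \le c\,\mu'(A)$ for all measurable $A$ implies $\int f\, d\mu \le c \int f\, d\mu'$ for every nonnegative bounded measurable $f$ (approximate $f$ from below by simple functions and pass to the limit by monotone convergence) --- yields $(\mu_{\vec{x}} \times \nu_{\vec{x}})(S) \le e^{\eps_1 + \eps_2}(\mu_{\vec{x}'} \times \nu_{\vec{x}'})(S)$. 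Hence $\Alg'$ is $(\eps_1 + \eps_2)$-differentially private.

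For (b), I would present $\Alg$ as the composition of $\Alg'$ with a Markov kernel $K(\,\cdot \mid r, s)$ on the final output space that does not depend on $\vec{x}$. For any measurable target set $S$,
\[
  \Pr[\Alg(\vec{x}) \in S] = \int K(S \mid r, s)\, d(\mu_{\vec{x}} \times \nu_{\vec{x}})(r,s),
\]
and since $(r,s) \mapsto K(S \mid r, s)$ is a fixed measurable function valued in $[0,1]$, the same nonnegative-integrand fact used above --- now with the product measures, which obey the $(\eps_1+\eps_2)$ bound by the previous paragraph --- gives $\Pr[\Alg(\vec{x}) \in S] \le e^{\eps_1+\eps_2}\Pr[\Alg(\vec{x}') \in S]$. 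As this holds for every $j$, every $x_j'$, and every horizon $T$, $\Alg$ satisfies $(\eps_1+\eps_2)$-differential privacy. In the continual-observation model one should additionally check that $\Alg$ is still a legitimate online algorithm --- its step-$k$ output may consult only the step-$\le k$ outputs of $\Alg_1$ and $\Alg_2$ --- but this is a structural restriction on how the algorithms are glued together and plays no role in the privacy calculation.

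The conceptual content is the one line where independence of the coins lets one multiply the two likelihood-ratio bounds; everything else is bookkeeping. I expect the only real friction to be making ``integrate the pointwise inequality'' rigorous when the mechanisms need not have densities (Fubini for sections, plus the simple-function approximation for nonnegative integrands). If a lighter exposition is preferred, I would instead cite the basic composition and post-processing lemmas as standard, since both predate this work.
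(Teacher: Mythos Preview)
Your argument is correct and is the standard proof of basic composition plus post-processing. Note, however, that the paper does not actually prove this theorem: it is stated as a cited result from \cite{DMNS} with no accompanying proof, so there is nothing in the paper to compare your approach against. Your write-up would serve perfectly well as a self-contained justification; the only remark is that the last sentence of your proposal (citing composition and post-processing as standard) is in fact exactly what the paper does.
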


\medskip
\noindent
{\bf Utility}
We adopt the following, commonly used notion of utility:

\begin{definition}
  Let $\Alg$ be a randomized online algorithm that at time step $j$
  outputs $\hat{F}(x_1, \ldots, x_j) \in \mathbb{R}$. Then, $\Alg$
  achieves $(\delta, \gamma)$-utility with respect to a function $F$, if
  for all $j$,
    $\Pr[|\hat{F}(x_1, \ldots, x_j) - F(x_1, \ldots, x_j)| > \delta] < \gamma$.
\end{definition}

\begin{figure}[t!]
\begin{center}
\psfrag{T}[cc][cc]{{\tiny $\tree=\tree(L, U)$}}
\psfrag{T'}[cc][cc]{{\tiny $\blue{\tree'=\tree(L+4, U)}$}}
\psfrag{[L,U]}[cc][cc]{{\tiny $[L,U]$}}
\psfrag{[L,L+3]}[cc][cc]{{\tiny $[L,L\!+\!3]$}}
\psfrag{[L+4,U]}[cc][cc]{{\tiny $[L\!+\!4,U]$}}
\psfrag{[L,L+1]}[cc][cc]{{\tiny $[L,L\!+\!1]$}}
\psfrag{[L+2,L+3]}[cc][cc]{{\tiny $[ L\!\!+\!\!2 \! , \!L\!\!+\!\!3 ]$}}
\psfrag{[L+4,L+5]}[cc][cc]{{\tiny $[L \!\!+\!\!4 \! , \!L\!\!+\!\!5]$}}
\psfrag{[L+6,U]}[cc][cc]{{\tiny $[L\!+\!6,U]$}}
\psfrag{L}[cc][cc]{{\tiny $L$}}
\psfrag{L+1}[cc][cc]{{\tiny $L+1$}}
\psfrag{L+2}[cc][cc]{{\tiny $L+2$}}
\psfrag{L+3}[cc][cc]{{\tiny $L+3$}}
\psfrag{L+4}[cc][cc]{{\tiny $L+4$}}
\psfrag{L+5}[cc][cc]{{\tiny $u\!=\!L\!+\!5$}}
\psfrag{L+6}[cc][cc]{{\tiny $L+6$}}
\psfrag{U}[cc][cc]{{\tiny $U$}}
\psfrag{c_{LL+3}}[cc][cc]{{\tiny $c_{Lu'}=c_{L, L+3}$}}
\psfrag{c_{L+4L+5}}[cc][cc]{{\tiny $s(u,\tree')=c_{L+4, L+5}$}}
\psfrag{s(u,T)}[cc][cc]{{\tiny $s(u, \tree)=c_{L, L+3}+c_{L+4, L+5}$}}
\includegraphics[width=0.95\columnwidth]{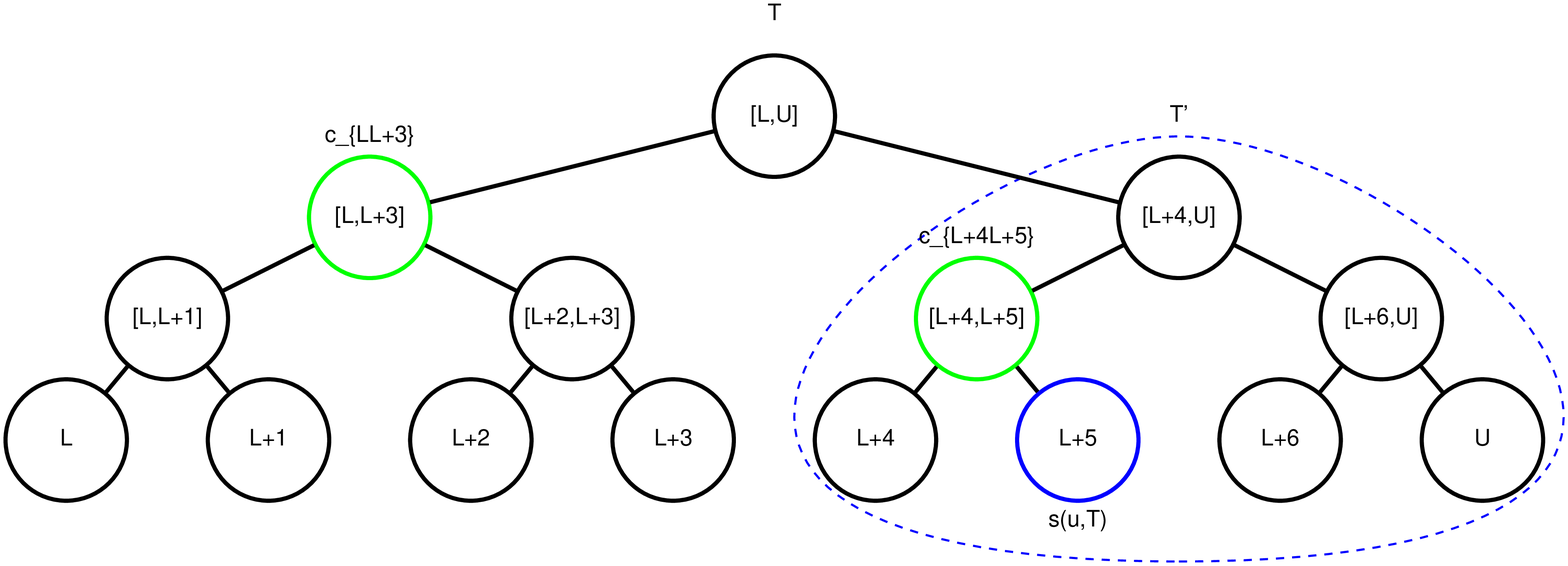}
\caption{Dyadic tree data structure. In this example, $u=L+5$ is shown in a blue node, $u'=L+3$, and the prefix sum $s(u, \tree )=c_{L, L+3}+c_{L+4, L+5}$ is obtained by adding the counters at the two green nodes $[L,L+3]$ and $[L+4,L+5]$.}
\label{fig:DyadicTree}
\end{center}
\end{figure}

\medskip
\noindent
{\bf Dyadic Tree Datastructure}
We repeatedly use the following \emph{dyadic tree} data
structure which is common in algorithmics. This data structure is a balanced augmented search tree and
variants of it are common in much algorithmic work.

Let $\tree(L, U)$ be a complete binary tree of height $h =
\log(U-L+1) + 1$ (assuming, for simplicity, that $U-L + 1$ is a power
of 2). The leaves of the tree are indexed by the integers $L, L+1,
\ldots, U$, and if two sibling nodes are indexed by the intervals
$[l_1, u_1]$ and $[l_1 = u_1 + 1, u_2]$, then their parent is indexed
by $[l_1, u_2]$. Note that at level $k$ of the tree (the leaves being
at level $1$), the indexing intervals have the form $[L +
(i-1)2^{k-1}, L + i2^{k-1} -1]$ for $i \in [1, 2^{h-k}]$. We call a
node whose indexing interval precedes its sibling's indexing interval
a \emph{left node}; the sibling of a \emph{left node} is a \emph{right
  node}. With each node we associate a variable: for the node indexed
by $[l, u]$, the associated variable is denoted $c_{lu}$. Given a tree
$\tree = \tree(L, U)$ and a prefix interval $[L, u]$,  we define
function $s(u, \tree)$ recursively:
\begin{packed_item}
\item if $[L, u]$ indexes a node in $\tree$, then $s(u, \tree) = c_{Lu}$;
\item otherwise, let $u'$ be the largest integer less than $u$ such
  that $[L, u']$ indexes some node in $\tree$; equivalently, $u'$ is
  the largest integer less than $u$ that can be written as $u' = L +
  2^{k-1} - 1$. Let $\tree'$ be the subtree of $\tree$ rooted at the
  sibling of $[L, u']$ (indexed by $[u' + 1, u' + 2^{k-1}]$); then
  $s(u, \tree) = c_{Lu'} + s(u, \tree')$.
\end{packed_item}

The following lemma is essential to our analysis and can be easily
proved by induction.
%
\begin{lemma}
  \label{lm:log-terms}
  There exist $r \leq \lceil \log(u - L + 1) \rceil$ integers $u_1, \ldots
  u_{r}= u$ such that $s(u, \tree) = c_{L u_1} + \sum_{k =
    1}^{r-1}{c_{u_k+1,u_{k+1}}}$.  Furthermore, all nodes indexed by
  $[u_k+1, u_{k+1}]$ are left nodes in $\tree$, and each node is in a
  different level of $\tree$.
\end{lemma}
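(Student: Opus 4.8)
The plan is to induct on $u-L$. The base cases are those in which $[L,u]$ itself indexes a node of $\tree$ (this includes $u=L$): then $s(u,\tree)=c_{Lu}$ is a single counter, so $r=1$, and indeed $r\le\lceil\log(u-L+1)\rceil$ as soon as $u>L$ (the fully degenerate case $u=L$ needs just one counter, matching the bound up to the harmless off‑by‑one at $n=1$). So from now on assume $[L,u]$ does not index a node and that the statement holds for all strictly smaller values of the ``prefix length''.

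For the inductive step I would first unpack the recursive definition: let $u'=L+2^{\ell-1}-1$ be the largest integer below $u$ with $[L,u']$ a node of $\tree$, let $\tree'=\tree(u'+1,u'+2^{\ell-1})$ be the subtree rooted at the sibling $[u'+1,u'+2^{\ell-1}]$ of $[L,u']$, so that $s(u,\tree)=c_{Lu'}+s(u,\tree')$. The one genuinely load‑bearing point is that $u$ lies in the leaf range of $\tree'$, i.e.\ $u'<u<u'+2^{\ell-1}$. The left inequality is the definition of $u'$. For the right one: since $u\le U$, the node $[L,u']$ is not the root of $\tree$, so its parent $[L,u'+2^{\ell-1}]$ is again a node of $\tree$; if $u'+2^{\ell-1}<u$ this would contradict the maximality of $u'$, and if $u'+2^{\ell-1}=u$ it would contradict that $[L,u]$ is not a node. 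Hence $u'<u<u'+2^{\ell-1}$, and in particular $u-(u'+1)<u-L$, so the inductive hypothesis applies to the pair $(u,\tree')$.

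Now I would apply the inductive hypothesis to $\tree'$, whose leaf range has size $2^{\ell-1}$: it yields $q\le \lceil\log 2^{\ell-1}\rceil=\ell-1$ integers $u'+1\le v_1<\cdots<v_q=u$ with $s(u,\tree')=c_{(u'+1)v_1}+\sum_{j=1}^{q-1}c_{v_j+1,v_{j+1}}$, all of the nodes $[v_j+1,v_{j+1}]$ being left nodes of $\tree'$ at pairwise distinct levels of $\tree'$. Prepending $c_{Lu'}$ and renaming $u_1=u'$, $u_{j+1}=v_j$ (so $r:=q+1$, $u_r=u$) puts $s(u,\tree)$ into exactly the claimed form; a left node of the subtree $\tree'$ is still a left node of $\tree$ since the left/right label only records sibling order. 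For the level bookkeeping, note that the intervals $[u'+1,v_1],[v_1+1,v_2],\dots,[v_{q-1}+1,u]$ partition $[u'+1,u]$, which is a proper subinterval of the leaf range $[u'+1,u'+2^{\ell-1}]$ of $\tree'$; hence none of these intervals is the root of $\tree'$, so each such node sits at level $\le \ell-1$ of $\tree'$, i.e.\ of $\tree$, whereas $[L,u']$ sits at level $\ell$ of $\tree$ — so all $r$ nodes occupy pairwise distinct levels. Finally $u\ge u'+1=L+2^{\ell-1}$ gives $u-L+1>2^{\ell-1}$, hence $\lceil\log(u-L+1)\rceil\ge\ell\ge q+1=r$, closing the induction.

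I expect the main obstacle to be precisely the first half of the inductive step: extracting $u'<u<u'+2^{\ell-1}$ from the maximality of $u'$, which is what makes the recursive call well defined and legitimizes applying the inductive hypothesis to $\tree'$. The only other care needed is the observation that the intervals produced by the recursion partition $[u'+1,u]$ and hence all sit strictly below the root of $\tree'$ — this is what guarantees that $c_{Lu'}$ occupies a strictly higher, previously unused level and keeps the total count at $\ell$ rather than $\ell+1$. The remaining manipulations are routine.
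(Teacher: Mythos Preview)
Your inductive approach is sound in outline and parallels the paper's argument from the recursion; the bound $r\le\ell$ and the verification that $u'<u<u'+2^{\ell-1}$ are handled correctly. There is, however, one genuine gap. The inductive hypothesis, as you state it, asserts only that the nodes $[v_j+1,v_{j+1}]$ for $j=1,\ldots,q-1$ are left nodes at pairwise distinct levels of $\tree'$; it says nothing about the first node $[u'+1,v_1]$. But after your renaming this node becomes $[u_1+1,u_2]$, and the lemma requires it too to be a left node at a level distinct from the others. Your level-bookkeeping paragraph shows only that all $q$ nodes sit at level $\le\ell-1$ while $[L,u']$ sits at level $\ell$; it does not establish that the $q$ nodes themselves occupy pairwise distinct levels, since the hypothesis covered only $q-1$ of them. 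The same omission affects the left-node claim.

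The clean repair is to strengthen what you induct on: assert additionally that the first node $[L,u_1]$ is a left node whenever it is not the root of the current tree, and that \emph{all} $r$ nodes (first included) lie at pairwise distinct levels. Applied to $\tree'$, the strengthened hypothesis now covers $[u'+1,v_1]$ (which is not the root of $\tree'$ since $v_1\le u<u'+2^{\ell-1}$), and your existing argument that $[L,u']$ occupies the strictly higher level $\ell$ closes the induction; for the strengthened conclusion note that $[L,u']$ is itself a left node since $u'<U$. The paper sidesteps this bookkeeping by arguing the two side conditions directly rather than inductively: every node picked by the recursion contains the leftmost leaf of the current subtree and hence is a left node; and two consecutive picks at the same level would force the second to be the root of $\tree'$, contradicting the maximality of $u'$.
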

\begin{proof}
  The integers $u_1, \ldots, u_r$ are given directly by the recursive
definition of $s(u, \tree)$. To bound $r$, consider that at each step
in the recursion, unless $[L, u]$ indexes a node in $\tree$, the tree
$\tree'$ has at most half the number of leaves of the smallest subtree
of $\tree$ that contains $u$ as a leaf. Initially the smallest subtree
that contains $u$ as a leaf has number of leaves equal to the smallest
power of 2 greater than or equal to $u-L+1$, i.e. the number of leaves
initially is $2^{\lceil\log(u-L+1)\rceil}$. The recursion stops when
we reach a tree with only a single node, and, therefore, we make at
most $\lceil\log(u-L+1)\rceil$ recursive calls. The bound on $r$
follows.

The condition that all nodes are left siblings follows from the fact
each node is indexed by an interval that contains the leftmost leaf of
the current subtree.

Finally, notice that the only way to pick two nodes on the same level
is if after picking $u'$, in the next step of the recursion we pick
the root of $\tree'$. However, in this case we would have picked the
parent of $[L, u']$ instead of $[L, u']$, a contradiction.  
\end{proof}

\medskip
\noindent
{\bf Chernoff Bound for Laplace Variables}
We will use the following Chernoff bound for sums of independent
Laplace random variables. 
\begin{lemma}
  \label{lm:chernoff}
  Let $s_1,\ldots, s_n$ be independent Laplace random variables such
  that $s_i \sim \Lap(b_i)$. Denote $S = \sum_{i = 1}^n{s_i}$ and
  $\sigma = \sqrt{2\sum_{i = 1}^n{b_i^2}}$. Then, for all $\lambda <
  \min_i \frac{0.75}{b_i}$, we have $\Pr[|S| \geq t\sigma] \leq 2\exp(0.75
  \lambda^2 \sigma^2 - \lambda t\sigma).$
\end{lemma}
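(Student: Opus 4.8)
The argument is the standard Cramér–Chernoff method: control the moment generating function (MGF) of $S$, then apply Markov's inequality to $e^{\lambda S}$. The inequality is trivial when $\lambda \le 0$ (the right-hand side is then at least $2$), so fix $\lambda$ with $0 < \lambda < \min_i 0.75/b_i$. For a single variable $s_i \sim \Lap(b_i)$ and any real $\theta$ with $|\theta|\,b_i < 1$ one has $\E[e^{\theta s_i}] = (1 - \theta^2 b_i^2)^{-1}$. Our hypothesis forces $\lambda b_i < 3/4$ for every $i$, so $y_i := \lambda^2 b_i^2 < 9/16 < 1$ and the MGF is finite. The one elementary fact we need is
\[
  -\ln(1-y) \le \tfrac32 y \qquad \text{for all } 0 \le y \le \tfrac{9}{16},
\]
which I would verify by setting $f(y) = \tfrac32 y + \ln(1-y)$: then $f(0) = 0$, and $f'(y) = \tfrac32 - (1-y)^{-1}$ is positive on $[0,\tfrac13)$ and negative on $(\tfrac13,\tfrac9{16}]$, so $f$ attains its minimum over $[0,\tfrac9{16}]$ at an endpoint, and a direct numeric check gives $f(\tfrac9{16}) = \tfrac{27}{32} + \ln\tfrac7{16} > 0$. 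Hence $\E[e^{\lambda s_i}] = e^{-\ln(1 - y_i)} \le \exp\bigl(\tfrac32 \lambda^2 b_i^2\bigr)$.

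Next I would combine the single-variable bounds. By independence,
\[
  \E[e^{\lambda S}] = \prod_{i=1}^n \E[e^{\lambda s_i}] \le \exp\Bigl(\tfrac32 \lambda^2 \sum_{i=1}^n b_i^2\Bigr) = \exp\bigl(0.75\,\lambda^2 \sigma^2\bigr),
\]
using $\sigma^2 = 2\sum_i b_i^2$. Markov's inequality applied to the nonnegative random variable $e^{\lambda S}$ then yields
\[
  \Pr[S \ge t\sigma] = \Pr[e^{\lambda S} \ge e^{\lambda t \sigma}] \le e^{-\lambda t \sigma}\,\E[e^{\lambda S}] \le \exp\bigl(0.75\,\lambda^2\sigma^2 - \lambda t \sigma\bigr).
\]
Since each $-s_i$ is again distributed as $\Lap(b_i)$, the random variable $-S$ has the same distribution as $S$, so the identical bound holds for $\Pr[-S \ge t\sigma]$; a union bound over the two tails produces the factor $2$ in the statement.

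There is no real obstacle here beyond bookkeeping of constants: one must confirm that the coefficient $3/2$ in $-\ln(1-y) \le \tfrac32 y$ remains valid over the full range $y < 9/16$ permitted by the hypothesis $\lambda < 0.75/b_i$ (this is exactly why the constant $0.75$ appears and not, say, $1/2$), and that $\tfrac32\sum_i b_i^2 = 0.75\,\sigma^2$ reproduces precisely the exponent claimed. Everything else is the routine Chernoff manipulation.
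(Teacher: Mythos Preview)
Your proof is correct and follows essentially the same Cram\'er--Chernoff route as the paper: bound the Laplace MGF by $\exp(\tfrac32\lambda^2 b_i^2)$ under the constraint $\lambda b_i < 0.75$, multiply using independence, and apply Markov's inequality together with symmetry of $S$. You are in fact more careful than the paper's own writeup, since you explicitly verify that $-\ln(1-y)\le\tfrac32 y$ holds on the full interval $[0,9/16]$ and handle the trivial case $\lambda\le 0$.
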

\begin{proof}
  We use the standard technique of bounding the moment generating
function of $S$ and applying Markov's inequality. Details follow.

Since the distribution of $S$ is symmetric, we have $\Pr[|S| \geq
t\sigma] = 2\Pr[S \geq t\sigma]$. For any $\lambda$, we have:
\begin{align}
  \Pr[S \geq t\sigma] &= \Pr[e^{\lambda S} \geq e^{\lambda t
    \sigma}]\notag\\
  &\leq \frac{\E[e^{\lambda S}]}{e^{\lambda t\sigma}} = \frac{\prod_{i
      = 1}^n{\E[e^{\lambda s_i}}]}{e^{\lambda t \sigma}}\label{eq:chernoff-trick}
\end{align}
For $\lambda < 1/b_i$, the moment generating function of the
Laplace random variable $s_i$ is $\E[e^{\lambda s_i}] =
1/(1-\lambda^2b_i^2)$. Assuming $\lambda b_i \leq .75$, we have
\begin{equation*}
  \E[e^{\lambda s_i}] = \frac{1}{1 - \lambda^2b_i^2} <
  \exp(-\frac{3}{2}\lambda^2 b_i^2). 
\end{equation*}
Substituting into (\ref{eq:chernoff-trick}), we get
\begin{equation*}
  \Pr[S \geq t\sigma] \leq \exp(-\frac{3}{2}\lambda^2\sum_{i =
    1}^n{b_i^2} - \lambda t \sigma),
\end{equation*}
as desired.
\end{proof}

\section{Upper Bounds}

\subsection{Window Sum}

\begin{figure}[t!]
\begin{center}
\psfrag{W}[cc][cc]{{\tiny $W=4$}}
\psfrag{T1}[cc][cc]{{\tiny $\tree_1=\tree(1,W)=\tree(1, 4)$}}
\psfrag{T2}[cc][cc]{{\tiny $\tree_2=\tree(W+1,2W)=\tree(5, 8)$}}
\psfrag{[L,L+3]}[cc][cc]{{\tiny $[1,4]$}}
\psfrag{[L+4,U]}[cc][cc]{{\tiny $[5,8]$}}
\psfrag{[L,L+1]}[cc][cc]{{\tiny $[1,2]$}}
\psfrag{[L+2,L+3]}[cc][cc]{{\tiny $[3,4]$}}
\psfrag{[L+4,L+5]}[cc][cc]{{\tiny $[5,6]$}}
\psfrag{[L+6,U]}[cc][cc]{{\tiny $[7,8]$}}
\psfrag{L}[cc][cc]{{\tiny $1$}}
\psfrag{L+1}[cc][cc]{{\tiny $2$}}
\psfrag{L+2}[cc][cc]{{\tiny $3$}}
\psfrag{L+3}[cc][cc]{{\tiny $4$}}
\psfrag{L+4}[cc][cc]{{\tiny $5$}}
\psfrag{L+5}[cc][cc]{{\tiny $6$}}
\psfrag{L+6}[cc][cc]{{\tiny $7$}}
\psfrag{U}[cc][cc]{{\tiny $8$}}
\psfrag{c_{L}}[cc][cc]{{\tiny $c_{1}\!=\!x_1\!+\!z_1$}}
\psfrag{c_{L+1}}[cc][cc]{{\tiny $c_{2}\!=\!x_2\!+\!z_2$}}
\psfrag{c_{L+2}}[cc][cc]{{\tiny $c_{3}\!=\!x_3\!+\!z_3$}}
\psfrag{c_{L+3}}[cc][cc]{{\tiny $c_{4}\!=\!x_4\!+\!z_4$}}
\psfrag{c_{L+4}}[cc][cc]{{\tiny $c_{5}\!=\!x_5\!+\!z_5$}}
\psfrag{c_{L+5}}[cc][cc]{{\tiny $c_{6}\!=\!x_6\!+\!z_6$}}
\psfrag{c_{L+6}}[cc][cc]{{\tiny $c_{7}\!=\!x_7\!+\!z_7$}}
\psfrag{c_{U}}[cc][cc]{{\tiny $c_{8}\!=\!z_8$}}
\psfrag{c_{LL+1}}[cc][cc]{{\tiny $c_{12}=x_1+x_2+z_{12}$}}
\psfrag{c_{L+2L+3}}[cc][cc]{{\tiny $c_{34}=x_3+x_4+z_{34}$}}
\psfrag{c_{L+4L+5}}[cc][cc]{{\tiny $c_{56}=x_5+x_6+z_{56}$}}
\psfrag{c_{L+6U}}[cc][cc]{{\tiny $c_{78}=x_7+z_{78}$}}
\psfrag{c_{LL+3}}[cc][cc]{{\tiny $c_{14}=x_1+x_2+x_3+x_4+z_{14}$}}
\psfrag{c_{L+4U}}[cc][cc]{{\tiny $c_{58}=x_5+x_6+x_7+z_{58}$}}
\psfrag{s(u,T)}[cc][cc]{{\tiny $\hat{F}_w(7, W=4) = \red{c_{14}}-(\green{c_{12}+c_3})+ (\blue{c_{56}+c_{7}})$}}
\includegraphics[width=0.95\columnwidth]{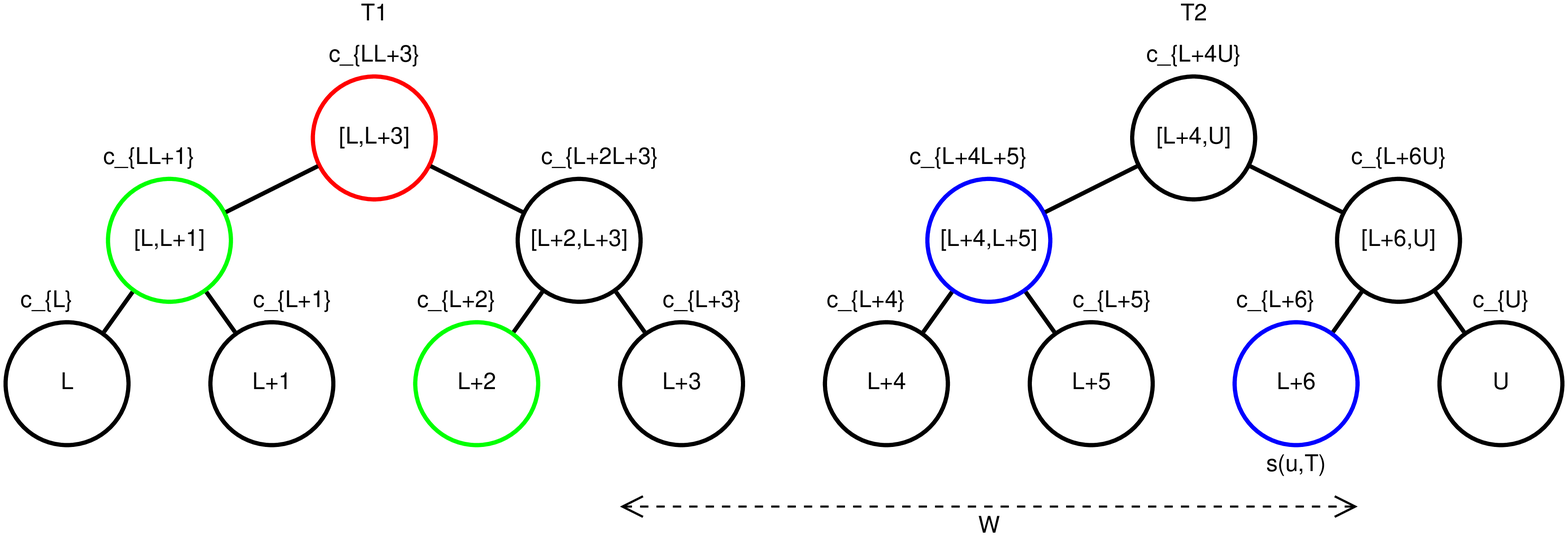}
\caption{Window sum for a window size $W=4$. This example illustrates the algorithm at time $i=7$, and the output is $\hat{F}_w(7, W=4) = s(W, \tree_{1}) - s(3, \tree_{1})+  s(7, \tree_2)= c_{14}-(c_{12}+c_3)+ (c_{56}+c_{7})=x_4+x_5+x_6+x_7+z_{14}-z_{12}-z_3+z_{56}+z_{7}$, where $z_{lu}$ denotes the noise at node $[l,u]$.}
\label{fig:DyadicTree-WindowSum}
\end{center}
\end{figure}

A key observation for computing window sums with error
polylogarithmically bounded in $W$ is that, unlike with running sum,
only the lowest $\log W + 1$ layers of the dyadic tree are necessary
to compute window sum. However, if we keep a dyadic tree for every
window of size $W$, each update will contribute to more than $W$
variables, resulting in data structures with large sensitivity, which,
for differential privacy, translates into more noise. Our main idea is
that instead of keeping a dyadic tree for every window, we can divide
the input into blocks of size $W$, and view the windows that span two
blocks as the union of a suffix and a prefix of two blocks. 

The algorithm \textsc{WindowSum} is shown as
Algorithm~\ref{alg:winsum}. In the remainder of this section we assume
that $W$ is an exact power of 2.

\begin{algorithm}[t]
  \caption{\textsc{WindowsSum}}\label{alg:winsum}
  {\fontsize{9}{9}\selectfont
  \begin{algorithmic}
    \STATE For $k \geq 1$, define $\tree_k = \tree((k-1)W + 1, kW)$,
    with all $c_{lu}$ initialized to $\Lap((\log W+1)/\eps)$.

    \FORALL{inputs $x_i$}
    \STATE add $x_i$ to all $c_{lu}$ in $\tree_{\lceil i/W
      \rceil}$ such that $i \in [l, u]$.
    \STATE output: $\hat{F}_w(i, W) = s((k-1)W, \tree_{k-1}) - s(i - W, \tree_{k-1})+  s(i, \tree_k)$, where $k = \lceil i/W \rceil$.
    \ENDFOR
  \end{algorithmic}
}
\end{algorithm}

\begin{theorem}
  \label{thm:winsum}
  \textsc{WindowSum} satisfies
  $\eps$-differential privacy, and achieves $(\delta, \gamma)$-utility
  with
  \begin{equation*}
    \delta = \begin{cases}
      O(\frac{1}{\eps}\log^{1.5} W \log^{0.5} \frac{1}{\gamma}), &\log
      W  \geq  \log\frac{1}{\gamma}\\
      O(\frac{1}{\eps}\log W \log\frac{1}{\gamma}), &\log W  <  \log\frac{1}{\gamma}
      \end{cases}
  \end{equation*}
  Furthermore, \textsc{WindowSum} can be implemented to use $O(W)$
  words of space and to run in $O(\log W)$ time per update.
\end{theorem}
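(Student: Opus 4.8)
\emph{Privacy.} The key structural point is that the trees $\tree_k$ partition the timeline into disjoint length-$W$ blocks, so an input $x_j$ is added only to the $\log W+1$ counters on the root-to-leaf path of leaf $j$ in the single tree $\tree_{\lceil j/W\rceil}$. Hence the vector of all counters $c_{lu}$ has $\ell_1$-sensitivity at most $\log W+1$ (one coordinate per ancestor of $j$, each changing by $|x_j-x'_j|\le 1$), and each counter carries fresh independent $\Lap((\log W+1)/\eps)$ noise, so by Theorem~\ref{thm:laplace} the counter vector is $\eps$-differentially private. It remains to observe that the entire output stream is a deterministic function of this vector: by Lemma~\ref{lm:log-terms} and the definition of $s(\cdot,\cdot)$, computing $\hat F_w(i,W)$ only reads counters whose indexing interval is contained in $[(k-2)W+1,\,i-W]$ or $[(k-1)W+1,\,i]$, i.e.\ counters that have already received all of their updates by step $i$; thus the value read always equals the final counter value. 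Privacy of the output stream then follows by post-processing.

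\emph{Utility: the identity.} The heart of the analysis is the identity $\hat F_w(i,W)=F_w(i,W)+\sum_\ell \pm z_\ell$, where the sum is over at most $2\log W+1$ pairwise distinct tree nodes and each $z_\ell\sim\Lap((\log W+1)/\eps)$. To establish it, expand $\hat F_w(i,W)=s((k-1)W,\tree_{k-1})-s(i-W,\tree_{k-1})+s(i,\tree_k)$: the ``true count'' parts telescope to $\sum_{i'=i-W+1}^{(k-1)W}x_{i'}+\sum_{i'=(k-1)W+1}^{i}x_{i'}=F_w(i,W)$, and the noise part is a $\pm1$ combination of the noise at the root of $\tree_{k-1}$, at the $\le\log W$ left nodes produced by Lemma~\ref{lm:log-terms} for $s(i-W,\tree_{k-1})$, and at the $\le\log W$ left nodes produced for $s(i,\tree_k)$. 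These three node sets are pairwise disjoint --- $\tree_{k-1}$ and $\tree_k$ are different trees, and inside $\tree_{k-1}$ the root can coincide with a decomposition node only in the degenerate case $i=kW$, where the first two terms cancel outright --- so these $\le 2\log W+1$ Laplace variables are independent, each with scale $(\log W+1)/\eps$.

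\emph{Utility: the tail bound.} Now apply Lemma~\ref{lm:chernoff} with $n\le 2\log W+1$ and every $b_\ell=(\log W+1)/\eps$, so $\sigma=\Theta(\tfrac1\eps\log^{1.5}W)$. When $\log(1/\gamma)=O(\log W)$, driving the tail down to $\gamma$ needs only $t=\Theta(\sqrt{\log(1/\gamma)})=O(\sqrt{\log W})$, which is small enough that the unconstrained optimal $\lambda=\Theta(t/\sigma)$ satisfies $\lambda<0.75/b$; Lemma~\ref{lm:chernoff} then gives the sub-Gaussian bound $2e^{-\Omega(t^2)}\le\gamma$ and hence $\delta=O(t\sigma)=O(\tfrac1\eps\log^{1.5}W\,\log^{0.5}\tfrac1\gamma)$. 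In the complementary regime $\log(1/\gamma)>\log W$ we instead fix $\lambda=c_0\eps/(\log W+1)$ for a small constant $c_0<0.75$, so Lemma~\ref{lm:chernoff} gives $2\exp(O(\log W)-\Omega(\eps t\sigma/\log W))\le\gamma$ as soon as $t\sigma=\Omega(\tfrac1\eps\log W\log\tfrac1\gamma)$, i.e.\ $\delta=O(\tfrac1\eps\log W\log\tfrac1\gamma)$.

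\emph{Resources, and the main obstacle.} For the resource bounds: at step $i$ only $\tree_{\lceil i/W\rceil}$ and $\tree_{\lceil i/W\rceil-1}$ are live (older trees are discarded), and each has $2W-1$ nodes, giving $O(W)$ words; processing $x_i$ touches its $\log W+1$ ancestors, and each evaluation of $s(\cdot,\cdot)$ walks a single root-to-leaf path, giving $O(\log W)$ time per step once the $O(W)$ cost of creating a fresh tree is amortized (or deamortized by building $\tree_{k+1}$ incrementally over the $W$ steps of block $k$). I expect the only genuine work to be the bookkeeping behind the two highlighted facts --- that the three prefix-sum terms touch pairwise-disjoint nodes (so the error is a sum of only $O(\log W)$ \emph{independent} Laplace terms), and that every counter read is already finalized (so the block decomposition keeps the sensitivity at $\log W+1$ rather than $\Omega(W)$); the Chernoff computation itself is routine.
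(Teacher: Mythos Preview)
Your proposal is correct and follows essentially the same route as the paper: bound the $\ell_1$-sensitivity of the counter vector by $\log W+1$, invoke the Laplace mechanism and post-processing for privacy, then bound the error as a sum of $O(\log W)$ independent $\Lap((\log W+1)/\eps)$ variables and apply the Chernoff bound of Lemma~\ref{lm:chernoff} in two regimes. Your treatment is in fact slightly more careful than the paper's in two spots---you explicitly verify that the noise terms read by the three prefix sums come from pairwise distinct nodes (so Lemma~\ref{lm:chernoff} actually applies), and your parametrization in the $\log W<\log(1/\gamma)$ case (fixing $\lambda=c_0\eps/(\log W+1)$ directly) is cleaner than the paper's $\eta$-based choice while yielding the same bound.
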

\begin{proof}
  \textbf{Privacy.} Observe that any variable $c_{lu}$ used to compute
  $\hat{F}_w(j, W)$ satisfies $l \leq u \leq j$. Therefore, the
  counters $c_{lu}$ that contribute to $\hat{F}_w(j, W)$ will not be
  updated after time step $j$ and $\hat{F}_w(j, W)$ will be
  identically distributed if it is computed at any time step $T \geq
  j$, so for the analysis we can assume that all outputs are
  produced at time step $T$. Next we fix $T$ and argue that
  \textsc{WindowSum} is $\eps$-differentially private for inputs of
  size $T$. Since the choice of $T$ is arbitrary, privacy for all $T$
  follows. For this purpose, let $\vec{c}(\vec{x})$ be the vector of
  the values of all variables (in an arbitrary order) $c_{lu}$ such
  that $u \leq T$ when the input is $\vec{x} = (x_1, \ldots,
  x_T)$. Let also $\vec{c}_0(\vec{x})$ be $\vec{c}(\vec{x})$ with the
  initializing Laplace noise removed. 
  Since each $x_j$
  contributes to exactly $\log W+1$ variables $c_{lu}$
  \begin{equation*}
    \forall j \in [T] \mbox{, } \forall x'_j \in [0, 1]: \|\vec{c}_0(x_1, \ldots, x_j, \ldots, x_T) -
    \vec{c}_0(x_1, \ldots, x'_j, \ldots, x_T)\|_1 \leq \log W + 1.
  \end{equation*}
  Differential privacy of $\vec{c}(\vec{x})$ follows from above
  and
  Theorem~\ref{thm:laplace}. Since the sequence of outputs of \textsc{WindowSum}
  up to time step $T$ is a deterministic function of
  $\vec{c}(\vec{x})$, privacy of \textsc{WindowSum} follows.

  \textbf{Accuracy.} It is easy to see that $\E \hat{F}_w(j, W) =
  F_w(j, W)$. By Lemma~\ref{lm:log-terms}, for each $k$ and each $u$,
  $s( u, \tree_k)$ is the sum of at most $\log W$ random variables,
  each with variance $2(\log W+1)^2/\eps^2$. Therefore, the
  standard deviation $\sigma$ of $\hat{F}_w(j, W)$ is $O(\log^{1.5}
  W/\epsilon)$.

  We consider two cases. For the first case, let $t = 2\sqrt{\ln
    (1/\gamma)}$ and $\lambda = \phi\frac{t}{\sigma}$ for a constant
  $\phi$ to be determined later. By Lemma~\ref{lm:chernoff}, as long
  as $\lambda < 0.75\eps/(\log W + 1)$, we have that
  $\Pr[|\hat{F}_w(j, W) - F_w(j,W)| > C\sqrt{\log (1/\gamma)}\sigma] <
  \gamma$ for some fixed constant $C$ that depends on $\phi$. A
  calculation shows that as long as $\log W \geq \log (1/\gamma)$,
  the minimum value of $\phi$ such that the
  constraint on $\lambda$ holds can be bounded below by a
  constant. This completes the analysis of the first case.
  
  For the second case, when $\log W< \log (1/\gamma)$, we set the
  following parameters: $\eta = \log_{\ln(1/\gamma)}{\ln W}$ (notice
  that $\eta < 1$); $t = C'\frac{\ln (1/\gamma)}{\sqrt{\ln W}}$, and
  $\lambda = \frac{t^{\eta/(2-\eta)}}{\sigma} =
  \frac{{C'}^{\eta/(2-\eta)}\sqrt{\ln W}}{\sigma}$, where $C'$ is a
  constant chosen so that $\lambda < 0.75\eps/(\log W + 1)$
  holds. Applying Lemma~\ref{lm:chernoff}, we have that for a value
  $C$ that depends on $C'$, $\Pr[S \geq C \frac{1}{\eps}
  \log(1/\gamma)\log W] \leq \exp(\Omega(t^{2/(2-\eta)})) =
  \exp(-\Omega(\ln 1/\gamma))$.
  
  For the running time and space complexity analysis, notice that each
  update requires accessing $O(\log W)$ nodes, and that only the last
  two dyadic trees need to be stored. 
\end{proof}

We can also show that we can approximate window sums
\emph{simultaneously} for all window sizes and preserve privacy under
continual observation.  Our approximation  is different
  for different window sizes $W$, and for any particular $W$, it is
almost the same as that of Theorem~\ref{thm:winsum}. Details can be
found in Appendix~\ref{app:all-winsum}. 

\section{Window Sum Simultaneously for all $W$}
\label{app:all-winsum}

Here we give an algorithm that works simultaneously for all window
sizes.  Our main observation \junk{for this algorithm} is that if for
window size $W$ we divide the input into blocks of size $W' \in [W,
2W]$ instead of exactly $W$ as in \textsc{WindowSum}, then we can
store all necessary dyadic tree datastructures as subtrees of a single
dyadic tree. However, storing the whole dyadic tree with the same
noise at any level will result in error of size $\Omega(\log^{1.5}T)$
for all $W$. Instead, we want to make sure that within a subtree of
height $h$, the noise added to any variable is proportional to $h$. To
achieve this, we use a different privacy parameter $\eps_k$ at level
$k$ of the dyadic tree and ensure that the sum of privacy parameters
converges to $\eps$.

Let $\beta>1$ be a parameter and $\zeta(\cdot)$ be the Riemann zeta
function: $\zeta(\beta) = \sum_{1}^{\infty}{i^{-\beta}}$. Set
$\epsilon_k = \frac{\eps}{ \zeta(\beta)k^\beta}$. The algorithm
\textsc{AllWindowSum} is shown as Algorithm~\ref{alg:all-winsum}. Proof of theorem below is 
analogous to Theorem~\ref{thm:winsum}. 

\begin{algorithm}[t]
  \caption{\textsc{AllWindowSum}} \label{alg:all-winsum}
{
  \begin{algorithmic}
    \STATE Initialize $\tree = \tree(1, 1)$, with $c_{1,1}$ initialized
    to $\Lap(1/\eps_1)$. 
    \FORALL{updates $x_i$}
    \IF{the rightmost leaf of $\tree$ is $i-1$}
    \STATE   Grow $\tree$ so that $\tree = \tree(1, 2(i-1))$, adding additional
    nodes and variables as necessary; initialize new variables at level
    $k$ to $\Lap(1/\eps_k)$. 
    \STATE Add the value $c^{0}_{1, i-1}$ to the root variable $c_{1, 2(i-1)}$, where $c^{0}_{lu}$ is
    the value of $c_{lu}$ without the Laplace noise.
    \ENDIF
    \STATE Add $x_i$ to all $c_{lu}$ in such that $i \in [l, u]$.
    \STATE Let $W' = 2^{\lceil \log W \rceil}$. At time step $j$,
    output $\hat{F'}_w(j, W) = s((k-1)W, \tree_{k-1}) - s(j -
    W,\tree_{k-1}) +  s(j, \tree_k)$,
    where $k = \lceil j/W' \rceil$.
    \ENDFOR
  \end{algorithmic}}
\end{algorithm}

\begin{theorem}
  \label{thm:all-winsum}
  There exists a constant $K$ s.t.~\textsc{AllWindowSum} satisfies
  $\eps$-differential privacy and achieves $(\delta, \gamma)$-utility,
  where
  \begin{equation*}
    \delta =
    \begin{cases}
      O(\frac{1}{\eps}\log^{1.5\beta} W \log^{0.5} \frac{1}{\gamma}), &\log W \geq
      K \log \frac{1}{\gamma}\\
      O(\frac{1}{\eps}\log^\beta W \log \frac{1}{\gamma}), &\log W < K \log \frac{1}{\gamma}
    \end{cases}
  \end{equation*}
  Furthermore, the algorithm can be implemented to use $O(T)$ words of
  space and run in $O(\log T)$ time per update on inputs consisting of
  $T$ updates.
\end{theorem}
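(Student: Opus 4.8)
The plan is to follow the proof of Theorem~\ref{thm:winsum}, adapting it in two places: the sensitivity and variance bookkeeping must track the level-dependent noise scales $1/\eps_k$, and I must verify that the block trees used to answer a window-$W$ query are shallow subtrees of the single growing tree $\tree$. The first step is to establish, by induction on the number of updates, the invariant that at all times every variable $c_{lu}$ of $\tree$ equals $\big(\sum_{l\le i\le u}x_i\big)+Z_{lu}$, where $Z_{lu}\sim\Lap(1/\eps_k)$ is a single Laplace variable, $k$ is the level of $[l,u]$ (leaves at level $1$), and the $Z_{lu}$ are mutually independent. The only nontrivial case is the growth step: when $\tree$ doubles and $[1,i-1]$ becomes the left child of the new root $[1,2(i-1)]$, the root is freshly initialized to $\Lap(1/\eps_k)$ at its new level $k$ and then has the \emph{noiseless} value $c^0_{1,i-1}=\sum_{i'\le i-1}x_{i'}$ added to it; because we copy $c^0$ rather than the noisy $c_{1,i-1}$, no noise compounds along ancestral paths and the invariant (exactly one Laplace per node) is preserved.

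\textbf{Privacy.} Fix $T$. Exactly as in Theorem~\ref{thm:winsum}, all outputs -- for every window size $W$ and every time step -- are deterministic functions of the vector $\vec{c}$ of counters with right endpoint at most $T$, so it suffices to show $\vec{c}$ is $\eps$-differentially private. By the invariant, changing a single $x_j$ changes exactly one counter per level, each by at most $1$ in absolute value, so the level-$k$ sub-vector has $\ell_1$-sensitivity $1$ and equals the true level-$k$ interval sums plus independent $\Lap(1/\eps_k)$ noise; by Theorem~\ref{thm:laplace} it is $\eps_k$-differentially private. Since distinct levels use independent noise, iterating Theorem~\ref{thm:composition} over the $O(\log T)$ levels shows $\vec{c}$ is $\big(\sum_k\eps_k\big)$-private, and $\sum_{k\ge1}\eps_k=\frac{\eps}{\zeta(\beta)}\sum_{k\ge1}k^{-\beta}=\eps$. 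Hence \textsc{AllWindowSum} is $\eps$-differentially private simultaneously for all window sizes.

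\textbf{Accuracy.} Fix $W$ and put $W'=2^{\lceil\log W\rceil}\in[W,2W)$. Since $W'$ is a power of $2$, each interval $[(k-1)W'+1,kW']$ indexes a node of $\tree$ at level $\log W'+1$, so $\tree_k:=\tree((k-1)W'+1,kW')$ is the subtree rooted there; it has height $\lceil\log W\rceil+1$, and its leaves are leaves of $\tree$, so a node at local level $k'$ in $\tree_k$ sits at global level $k'$ in $\tree$ and carries $\Lap(1/\eps_{k'})$ noise of variance $\Theta({k'}^{2\beta}/\eps^2)$. A window-$W$ sum ending at $j$ spans at most two consecutive blocks of size $W'$, so $\hat{F'}_w(j,W)$ is a combination of $O(1)$ prefix sums over $\tree_{k-1}$ and $\tree_k$, each of which (Lemma~\ref{lm:log-terms}) is a sum of at most $\lceil\log W\rceil$ counters, one per level, at levels at most $\lceil\log W\rceil+1$. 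As in Theorem~\ref{thm:winsum}, $\E\,\hat{F'}_w(j,W)=F_w(j,W)$, and the variance is $O\big(\sum_{k'\le\lceil\log W\rceil+1}{k'}^{2\beta}/\eps^2\big)=O(\log^{2\beta+1}W/\eps^2)$, which since $\beta>1$ is at most $O(\log^{3\beta}W/\eps^2)$, giving standard deviation $\sigma=O(\frac{1}{\eps}\log^{1.5\beta}W)$. I would then run the two-case Chernoff argument of Theorem~\ref{thm:winsum} verbatim, using Lemma~\ref{lm:chernoff} with scales $b_{k'}=1/\eps_{k'}$, so that $\min_{k'}0.75/b_{k'}=\Theta(\eps/\log^{\beta}W)$: for $\log W\ge K\log\frac{1}{\gamma}$ this yields error $O(\sigma\sqrt{\log(1/\gamma)})=O(\frac{1}{\eps}\log^{1.5\beta}W\log^{0.5}\frac{1}{\gamma})$, and for $\log W<K\log\frac{1}{\gamma}$ the same parameter choice as in Theorem~\ref{thm:winsum} gives $O(\frac{1}{\eps}\log^{\beta}W\log\frac{1}{\gamma})$; the constant $K$ is exactly what makes the constraint $\lambda<\min_{k'}0.75/b_{k'}$ satisfiable with $\phi=\Omega(1)$.

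\textbf{Complexity and main obstacle.} Growth steps double the tree and cost $O(1)$ amortized per update; each update touches one node per level, i.e.\ $O(\log T)$ nodes, and answering a window-$W$ query touches $O(\log W)=O(\log T)$ nodes, so the time per update is $O(\log T)$; the stored tree has $O(T)$ nodes, hence $O(T)$ words of space. The one genuinely delicate point is the privacy step at the growth operation: one must check that copying the \emph{noiseless} partial sum into the new root (rather than the noisy counter) keeps exactly one Laplace variable attached to every node, since otherwise noise would compound along root-paths and both the per-level sensitivity bound and the clean telescoping $\sum_k\eps_k=\eps$ would fail. Once the invariant is in hand, the remainder is a bookkeeping-level adaptation of Theorem~\ref{thm:winsum}.
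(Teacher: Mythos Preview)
Your proposal is correct and follows essentially the same approach as the paper's proof, which is itself only a two-paragraph sketch pointing back to Theorem~\ref{thm:winsum}: per-level privacy via Theorem~\ref{thm:composition} with $\sum_k\eps_k=\eps$, and the utility analysis redone with the observation that only levels up to $\log W'+1$ are touched and the maximal noise scale there is $O((\log W)^\beta/\eps)$. You supply more detail than the paper does---in particular the invariant about the growth step and the explicit variance sum $\sum_{k'}k'^{2\beta}$---but the structure is the same.
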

\begin{proof}
  \textbf{Privacy.} The proof of privacy is analogous to the proof of
  privacy for Theorem~\ref{thm:winsum}, but we treat different levels
  of $\tree$ separately and use Theorem~\ref{thm:composition} to bound
  the total privacy loss. More precisely, we show that level $k$ in
  the tree satisfies $\eps_k$-differential privacy and use the fact
  that $\sum_{k=1}^\infty{\eps_k} = \eps$.
  
  \textbf{Utility.} The utility analysis is also analogous to the
  proof of Theorem~\ref{thm:winsum}, noticing the following facts:
  \textbf{(1)} $W \leq W' \leq 2W$; \textbf{(2)} as an upper bound on
  the variance of any variable used to compute $\hat{F'}_w(j, W)$ we
  can use the variance of variables at level $\log W' + 1$, which is
  $O(\log^\beta W)$.
  The rest of the proof is unchanged.
\end{proof}

\subsection{Exponential Decay}

\junk{For the exponential decay sum problem, our goal is to design a
differentially private estimator that outputs an estimate at every
time step, and for any fixed time step $j$ has error at most
$\mathcal{E}$ with constant probability, where $\mathcal{E}$
satisfies: \textbf{(1)} $\mathcal{E} \ll
\frac{1}{1-\alpha}$($\frac{1}{1-\alpha}$ is the size of the range for
the exponential decay sum); \textbf{(2)} $\mathcal{E}$ is independent
of $j$.These properties are analogous to the properties we required
for the window sum problem. We are interested in the regime where
$\alpha \rightarrow 1$, as we want an algorithm whose error bound
grows slowly when the range of the exponential decay sum grows.}

While for the window sum problem we keep a sequence of dyadic trees,
for the exponential decay problem we keep a single dyadic tree that
grows over time. The main property of exponentially decaying sums
that we use is that if $S_1$ is the exponential decay sum over a time
interval $[a, b-1]$ and $S_2$ is the exponential decay sum over a time
interval $[b, c]$, then $\alpha^{c - b + 1}S_1 + S_2$ is the
exponential decay sum over the time interval $[a, c]$. Thus at a node
in the dyadic tree that is indexed by interval $[l, u]$ we can keep
the exponential decay sum for that interval. However, doing this for
every interval results in a data structure with unbounded
sensitivity. We update only the left nodes in the tree and show that
we can bound the sensitivity in that case.

\begin{algorithm}[t]
  \caption{\textsc{ExponentialSum}} \label{alg:expsum}
{
  \begin{algorithmic}
  \STATE Set $\lambda =  \frac{1}{\alpha \ln 2}\left(\ln
  \frac{2\alpha}{1 - 1\alpha} + \frac{1}{2} + \ln 2\right)$. 

\STATE Initialize $\tree = \tree(1, 1)$, with $c_{1,1}$ initialized
  to $\Lap(\lambda/\eps)$

  \FORALL{updates $x_i$}
  \IF{the rightmost leaf of $\tree$ is $i-1$}
  \STATE  Grow $\tree$ so that $\tree = \tree(1, 2(i-1))$, adding additional
  nodes and variables as necessary and initializing new variables to
  $\Lap(\lambda/\eps)$. 
  \STATE Add the value $\alpha^{i-1}c^{0}_{1,
    i-1}$ to the root variable $c_{1, 2(i-1)}$, where $c^{0}_{lu}$ is
  the value of $c_{lu}$ without the Laplace noise.
  \ENDIF
  \FORALL{$[l, u]$ such $i \in [l, u]$ and the node indexed by $[l,
    u]$ is a left node}
  \STATE add $x_i\alpha^{u-i}$ to $c_{lu}$
  \ENDFOR
  \STATE output $\hat{F}_e(j, \alpha) = \sum_{k = 0}^r{c_{u_{k},u_{k+1}}\alpha^{j - u_{k+1}}}$.
  \ENDFOR
  \end{algorithmic}}
\end{algorithm}

The \textsc{ExponentialSum} algorithm is shown as
Algorithm~\ref{alg:expsum}. We analyze the algorithm for $\alpha \in
(2/3, 1)$; observe that when $\alpha \leq 2/3$, the range of the $F_e$
is $[0, 3]$, and, thereofore, achieving $(1.5, 0)$-utility is
trivial. Thus $\alpha \rightarrow 1$ is the interesting regime for
approximating $F_e$. 

The following lemma is useful in the analysis.  
\begin{lemma}
  \label{lm:decr-rank}
  For an arbitrary $i$, let $[l_1, u_1], [l_2, u_2], \ldots$ be the
  sequence of intervals such that $\forall k: i \in [l_k, u_k]$ and
  $[l_k, u_k]$ is a left node. Assume the intervals are ordered in
  ascending order of $u_k - l_k$. Then $u_k - i \geq 2^{k-1}-1$.
\end{lemma}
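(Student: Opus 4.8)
The plan is to prove Lemma~\ref{lm:decr-rank} by induction on $k$, exploiting the dyadic structure of the tree. First I would set up the induction carefully: the intervals $[l_1,u_1],[l_2,u_2],\ldots$ are exactly the left nodes whose indexing interval contains the fixed leaf $i$, sorted so that $u_1-l_1 < u_2-l_2 < \cdots$. Since all of these are ancestors (or $i$ itself, if $i$ happens to index a left leaf) lying on the root-to-leaf path of $i$, they are strictly nested: $[l_1,u_1]\subsetneq[l_2,u_2]\subsetneq\cdots$, and consecutive ones differ by exactly one level in the tree (a left node and the next-larger left node containing $i$ cannot be two levels apart without an intermediate left node on the path). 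A left node at level $m$ has indexing interval of length $2^{m-1}$, so $[l_k,u_k]$ has length $2^{m_k-1}$ for some strictly increasing sequence of levels $m_1 < m_2 < \cdots$; in particular $m_k \geq k$, so $|[l_k,u_k]| = u_k - l_k + 1 \geq 2^{k-1}$, which already gives $u_k - l_k \geq 2^{k-1}-1$.

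The remaining work is to upgrade $u_k - l_k \geq 2^{k-1}-1$ to the sharper claim $u_k - i \geq 2^{k-1}-1$, i.e.\ to show that $i$ sits in the \emph{left half} of every such interval far enough from the right endpoint. The key observation is that a left node $[l,u]$ at level $m$ has a right sibling $[u+1, u+2^{m-1}]$ at the same level, and their common parent $[l, u+2^{m-1}]$ is a left node at level $m+1$ (assuming it is not the root's right child; boundary cases where the parent is a right node simply terminate the sequence and are handled trivially). So if $[l_k,u_k]$ is a left node containing $i$ at level $m_k$, the next larger left node $[l_{k+1},u_{k+1}]$ containing $i$ is obtained by repeatedly moving to parents; crucially, when we move from a left node to its parent the right endpoint can only grow, whereas when we move from a right node to its parent the left endpoint stays. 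I would track precisely how $u_k - i$ evolves: I claim $u_{k+1} - i \geq (u_k - i) + 2^{m_k - 1} \geq (u_k-i) + 2^{k-1}$ whenever $[l_{k+1},u_{k+1}]$ is the parent of $[l_k,u_k]$, because the parent contains $[l_k,u_k]$'s right sibling, adding $2^{m_k-1}$ leaves to the right of $u_k$. If instead one or more intermediate nodes on the path from $[l_k,u_k]$ to $[l_{k+1},u_{k+1}]$ are right nodes, those steps do not decrease $u - i$ either (the right endpoint only moves up the tree), so the bound only improves. Summing the telescoping inequality $u_{k+1}-i \geq (u_k - i) + 2^{k-1}$ from a base case $u_1 - i \geq 2^0 - 1 = 0$ (immediate, since $i \in [l_1,u_1]$) yields $u_k - i \geq \sum_{j=1}^{k-1} 2^{j-1} = 2^{k-1}-1$.

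The main obstacle I anticipate is getting the bookkeeping for the "intermediate right-node steps" exactly right --- that is, carefully justifying that walking up the tree from one left-node-containing-$i$ to the next never decreases $u - i$, and that between consecutive entries $[l_k,u_k]$ and $[l_{k+1},u_{k+1}]$ in the sorted list there is at least one ``productive'' step (namely the step out of $[l_k,u_k]$ itself, since $[l_k,u_k]$ is a left node, so its parent absorbs its right sibling and pushes $u$ up by $2^{m_k-1}$). A clean way to organize this is to argue directly about the binary representation of $i - l_k$ and the levels, rather than tracing the path step by step; but the inductive path-tracing argument above is conceptually transparent and I would present that, relegating the boundary cases (where a parent turns out to be a right node, terminating the sequence) to a one-line remark since they make the inequality vacuous or immediate. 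Everything else is elementary arithmetic with powers of two.
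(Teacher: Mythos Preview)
Your approach is essentially the same as the paper's: induction on $k$, with the key inductive step being that the right endpoint jumps by at least the size of $[l_k,u_k]$ (namely $\geq 2^{k-1}$) when passing to the next larger left ancestor, because the parent of a left node absorbs its right sibling; summing the telescoping inequality yields $u_k - i \geq 2^{k-1}-1$. The paper phrases the same step as $u_k \geq u'_{k-1} = u_{k-1} + (u_{k-1}-l_{k-1}+1) \geq u_{k-1} + 2^{k-2}$, using that every ancestor of $[l_{k-1},u_{k-1}]$ contains its right sibling.

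One small correction: your parenthetical claim that ``a left node and the next-larger left node containing $i$ cannot be two levels apart without an intermediate left node on the path'' is false --- the left/right pattern along the root-to-leaf path is arbitrary, so consecutive entries in your list can be many levels apart. You yourself acknowledge this later when you handle ``one or more intermediate nodes \ldots\ are right nodes,'' and your argument only needs $m_k \geq k$ (which follows from the levels being distinct and increasing), so the slip is harmless; just drop the parenthetical.
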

\begin{proof}
  By induction. The base
case is trivial, as from $i \in [l_1, u_1]$ follows $u_1 -i \geq
0$. For the inductive step, it suffices to show that $u_k - u_{k-1}
\geq 2^{k-2}$. By the construction of $\tree$, all nodes indexed by
intervals $[l, u]$ such that $i \in [l, u]$ lie on the path from the
leaf indexed by $i$ to the root of $\tree$. Therefore, all nodes
indexed by $[l_k, u_k]$ for some $k$ are ancestors of $i$, and, by the
construction of $\tree$ we have $u_k - l_k + 1\geq 2^{k-1}$; in
particular, $[l_k, u_k]$ is an ancestor of $[l_{k-1}, u_{k-1}]$ and
$u_{k-1}- l_{k-1} + 1 \geq 2^{k-2}$. By assumption, all nodes indexed
by $[l_k, u_k]$ are left nodes; let the right sibling of $[l_{k-1},
u_{k-1}]$ be the node indexed by $[l'_{k-1}, u'_{k-1}]$. By
construction, $u'_{k-1} - l'_{k-1} = u_{k-1} - l_{k-1}$ and the parent
of both nodes is indexed by $[l_{k-1}, u'_{k-1}]$. All ancestors of
$[l_{k-1}, u_{k-1}]$ are indexed by intervals that contain $[l_{k-1},
u'_{k-1}]$ as a subinterval, and, therefore,
\begin{align*}
  u_k &\geq u'_{k-1} = u_{k-1} +(u_{k-1} - l_{k-1} + 1) \\
  &\geq u_{k-1} + 2^{k-2}    
\end{align*}
This completes the inductive step.
\end{proof}

\begin{theorem}
  \label{thm:exp-sum}
  Assume $\alpha \in (2/3, 1)$ and let $K$ be a universal
  constant. \textsc{ExponentialSum} satisfies $\eps$-differential
  privacy and achieves ($\delta, \gamma$)-utility with
  \begin{equation*}
    \delta = \begin{cases}
      O(\frac{1}{\eps}\frac{\alpha}{1-\alpha} \log^{0.5} \frac{1}{\gamma}), &\log \frac{\alpha}{1-\alpha}
    \geq \log \frac{1}{\gamma}\\
    O(\frac{1}{\eps}\log \frac{\alpha}{1-\alpha} \log \frac{1}{\gamma}),
    &\log \frac{\alpha}{1-\alpha} < \log \frac{1}{\gamma}
    \end{cases}
  \end{equation*}
  Furthermore, \textsc{ExponentialSum} can be implemented to use
  $O(\log T)$ words of space and to run in $O(\log T)$ time per update on
  inputs consisting of $T$ updates.
\end{theorem}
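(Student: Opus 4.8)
The plan is to follow the proof of Theorem~\ref{thm:winsum} almost line for line, the only new ingredient being that the noise at a node is geometrically discounted by powers of $\alpha$. First I would show the algorithm keeps the exact decay sums at the left nodes, which gives unbiasedness; then bound the $\ell_1$ sensitivity of the (noiseless) counter vector and invoke Theorem~\ref{thm:laplace} for privacy; then bound the variance of the output and apply Lemma~\ref{lm:chernoff} with the same two-regime choice of parameters as in Theorem~\ref{thm:winsum}; and finally read off the resource bounds from the dyadic-tree bookkeeping. I use $\alpha\in(2/3,1)$ throughout, as assumed. The invariant is: by induction on the growth of $\tree$, for every left node $[l,u]$ whose interval has already elapsed, $c^0_{lu}=\sum_{i=l}^u x_i\alpha^{u-i}$, i.e.\ $c^0_{lu}$ is exactly the exponential decay sum over $[l,u]$. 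The direct updates $x_i\alpha^{u-i}$ handle indices $i\in[l,u]$; the only non-mechanical point is the grow step, where adding $\alpha^{m}c^0_{1,m}$ to the new root $c_{1,2m}$ is precisely the composition rule (decay sum over $[1,m]$ shifted past $[m+1,2m]$), so the invariant propagates up the left spine. Given the invariant, writing the prefix $[1,j]$ as the disjoint union of left nodes from Lemma~\ref{lm:log-terms} and applying the composition rule repeatedly gives $\sum_{\text{nodes }[l,u]} c^0_{lu}\alpha^{j-u}=F_e(j,\alpha)$, hence $\E\,\hat F_e(j,\alpha)=F_e(j,\alpha)$.

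\noindent\textbf{Privacy.}
Fix $T$. As in Theorem~\ref{thm:winsum}, every counter read to form the output at time $j$ has right endpoint $\le j$, hence is frozen after step $j$, so the whole output sequence may be taken to be emitted at time $T$ and is then a deterministic function of the vector $\vec c$ of all counters with $u\le T$. Changing one $x_j$ to $x'_j$ perturbs $c^0_{lu}$ only for left nodes with $j\in[l,u]$, and by at most $\alpha^{u-j}$ in absolute value; crucially the grow step does not compound this, since it merely transports the already-counted quantity $c^0_{1,m}$. Ordering the left nodes containing $j$ by increasing size and applying Lemma~\ref{lm:decr-rank} (the $k$-th has right endpoint $u_k$ with $u_k-j\ge 2^{k-1}-1$), the $\ell_1$ sensitivity of $\vec c_0$ is at most $\sum_{k\ge 1}\alpha^{2^{k-1}-1}=\tfrac1\alpha\sum_{m\ge 0}\alpha^{2^{m}}$. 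The one real computation is that this doubly-geometric series is at most $\lambda$: bounding $\sum_{m\ge 1}\alpha^{2^{m}}\le\int_0^\infty\alpha^{2^{x}}\,\dif x=\tfrac{1}{\ln 2}\int_{-\ln\alpha}^{\infty}\tfrac{e^{-s}}{s}\,\dif s$ and using $-\ln\alpha\ge 1-\alpha$ on $(2/3,1)$ gives $\sum_{m\ge 0}\alpha^{2^{m}}\le\tfrac{1}{\ln 2}\ln\tfrac{2\alpha}{1-\alpha}+\tfrac{1}{2\ln 2}+1=\alpha\lambda$, which is precisely why $\lambda$ is defined the way it is. Theorem~\ref{thm:laplace} then gives $\eps$-privacy for every $T$, hence $\eps$-privacy.

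\noindent\textbf{Utility and resources.}
The error $\hat F_e(j,\alpha)-F_e(j,\alpha)=\sum_k z_k\alpha^{j-u_k}$ is a weighted sum of independent $\Lap(\lambda/\eps)$ variables, where by Lemma~\ref{lm:log-terms} the decomposition nodes lie on strictly decreasing levels, so $j-u_k\ge 2^{r-k}-1$. Hence $\sigma=\sqrt{2\sum_k(\lambda/\eps)^2\alpha^{2(j-u_k)}}=O\big(\tfrac{\lambda}{\eps}\sqrt{\sum_{m\ge 0}(\alpha^{2})^{2^{m}}}\big)=O\big(\tfrac1\eps\log^{1.5}\tfrac{\alpha}{1-\alpha}\big)$, using $\lambda=O(\log\tfrac{\alpha}{1-\alpha})$ and the same integral bound applied to $\alpha^{2}$; and the largest scale among the summands is $b_{\max}=\lambda/\eps=O(\tfrac1\eps\log\tfrac{\alpha}{1-\alpha})$, since the rightmost decomposition node has weight $1$. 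As $\sigma$ and $b_{\max}$ stand in the same relation as in Theorem~\ref{thm:winsum} with $W$ replaced by $\tfrac{\alpha}{1-\alpha}$, feeding them into Lemma~\ref{lm:chernoff} with the identical two-regime choice of $t$ and Chernoff parameter yields the stated $(\delta,\gamma)$-utility (the argument in fact delivers $\log^{1.5}\tfrac{\alpha}{1-\alpha}$ in the first regime, which only strengthens the claimed $\tfrac{\alpha}{1-\alpha}$). For the resources, $\tree$ has $O(\log T)$ levels, each update modifies only the $O(\log T)$ left-node ancestors of the new leaf (plus the root on a grow step), and a counter is never read again once its interval is too far in the past to belong to any future prefix decomposition; keeping only these $O(\log T)$ frontier counters live gives $O(\log T)$ words of space and $O(\log T)$ time per update.

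\noindent\textbf{Main obstacle.}
I expect the crux to be pinning down the constants so that the sensitivity bound is genuinely $\le\lambda$ — the exponential-integral estimate, and checking that $-\ln\alpha\ge 1-\alpha$ suffices over all of $(2/3,1)$. Everything else is a faithful transcription of the window-sum argument; the essential new idea is just that the per-node noise is geometrically discounted by powers of $\alpha$, which is what keeps both the sensitivity and the variance polylogarithmic in the range $\tfrac{\alpha}{1-\alpha}$ even though $\tree$, and the number of terms in a single output, grow without bound.
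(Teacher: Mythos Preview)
Your proposal is correct and follows essentially the same approach as the paper: privacy via the sensitivity bound $\sum_{k\ge 1}\alpha^{2^{k-1}-1}$ from Lemma~\ref{lm:decr-rank} together with the exponential-integral estimate to show this is at most $\lambda$, and accuracy via the prefix decomposition of Lemma~\ref{lm:log-terms} to bound the variance by a geometric-of-geometrics series, finished by the two-regime Chernoff argument from Theorem~\ref{thm:winsum}. The only cosmetic difference is that you use $-\ln\alpha\ge 1-\alpha$ where the paper uses $\ln(1/\alpha)\ge(1-\alpha)/(2\alpha)$ to control $E_1(\ln(1/\alpha))$; both give the same $O(\log\tfrac{\alpha}{1-\alpha})$ bound, and your observation that the first regime actually yields $\log^{1.5}\tfrac{\alpha}{1-\alpha}$ rather than $\tfrac{\alpha}{1-\alpha}$ is a valid strengthening the paper does not make explicit.
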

\begin{proof}
  \noindent\textbf{Privacy.} 
It is sufficient to fix $T$ and argue that
\textsc{ExponentialSum} is $\eps$-differentially private for inputs
of size $T$ when all outputs for $j \leq T$ are produced at step $T$.

We analyze the sensitivity of $\tree$. Define $\vec{c}_0(\vec{x})$ as
in the proof of Theorem~\ref{thm:winsum} and $[l_1, u_1], [l_2, u_2],
\ldots$ as in Lemma~\ref{lm:decr-rank}. We have
\begin{align}
  \|\vec{c}_0(x_1, \ldots, x_i, \ldots, x_T) - \vec{c}_0(x_1, \ldots,
  1-x_i, \ldots, x_T)\|_1 &\leq \sum_{k =
    1}^\infty{\alpha^{u_k - i}x_i} \leq \sum_{k = 1}^\infty{\alpha^{u_k - i}}\notag\\
  &\leq \sum_{k = 1}^\infty{\alpha^{2^{k-1} - 1}} =
  \frac{1}{\alpha}\sum_{k =
    0}^\infty{\alpha^{2^k}}\notag\\
  &\leq \frac{1}{\alpha} +  \frac{1}{\alpha}\int_0^\infty{\alpha^{2^x}
    dx}\notag\\
  &= \frac{1}{\alpha} +\frac{1}{\alpha\ln
    2}\int_{\ln\frac{1}{\alpha}}^\infty{\frac{e^{-t}}{t}dt} \notag\\   
  &= \frac{\ln 2 + E_1(\ln{\frac{1}{\alpha}})}{\alpha \ln
    2}.\label{eq:sens-e1}
\end{align}
Here $E_1(x) = E_1(x) = \int_x^\infty{\frac{e^{-t}}{t}dt}$. We have
the following series expansion for $E_1$, which converges for all real
$|x| \leq \pi$~\cite{abramowitz1964handbook}:
\begin{equation}
  \label{eq:e1-series}
  E_1(x) = -\eta - \ln x + \sum_{k = 1}^\infty{\frac{(-1)^{k+1}x^k}{k!k}},
\end{equation}
where $\eta$ is the Euler-Mascheroni constant. Since, by
assumption, $\alpha > e^{-1}$, we have $\ln \frac{1}{\alpha} <
1$. For $x < 1$, the last term in (\ref{eq:e1-series}) is bounded by
$\eta + E_1(1) = \eta + \frac{1}{2}$. Therefore, we have,
\begin{align}
  E_1(2\ln{\frac{1}{\alpha}}) &\leq -\ln \ln \frac{1}{\alpha} +
  \frac{1}{2}\notag\\
  &= \ln \frac{1}{\ln \frac{1}{\alpha}} + \frac{1}{2}\label{eq:e1-log}
\end{align}
For $x \in (0, 2)$, we have the following series expansion for $\ln
x$:
\begin{equation}
  \label{eq:log-series}
  \ln x = x-1 - \sum_{k = 2}^{\infty}{\frac{(1-x)^k}{k}}.    
\end{equation}
Since by assumption $ 1/\alpha - 1 < 1/2$, we have $\ln (1/\alpha)
\geq (1/\alpha - 1)/2$. Substituting in (\ref{eq:e1-log}), we get
\begin{equation}
  \label{eq:e1-final}
  E_1(\ln{\frac{1}{\alpha}}) \leq \ln \frac{1}{\frac{1 - 
      \alpha}{2\alpha}} + \frac{1}{2} = \ln \frac{2\alpha}{1 -
    \alpha} + \frac{1}{2}
\end{equation}
Substituting (\ref{eq:e1-final}) into (\ref{eq:sens-e1}) gives us
the following bound on sensitivity:
\begin{align}
  \|\vec{c}_0(x_1, \ldots, x_i, \ldots, x_T) - \vec{c}_0(x_1&,
  \ldots, 1-x_i, \ldots, x_T)\|_1 \notag\\
  &\leq \frac{1}{\alpha \ln 2}\left(\ln
  \frac{2\alpha}{1 - 1\alpha} + \frac{1}{2} + \ln 2\right)  \label{eq:sens-final}
\end{align}
By Theorem~\ref{thm:laplace} and (\ref{eq:sens-final}),
\textsc{ExponentialSum} satisfies $\eps$-differential privacy.

\medskip
\noindent\textbf{Accuracy.} Clearly, $\E \hat{F}_e(j, \alpha) = F_e(j,
\alpha)$. Next we upper bound $\sigma^2$, the maximum variance
of $\hat{F}(j, \alpha)$ over all $j$. By Lemma~\ref{lm:decr-rank},
all intervals $[1, u_1], [u_1, u_2], \ldots, [u_r, j]$ correspond to
nodes in distinct levels of $\tree$, and therefore have sizes which
are distinct powers of 2. We have, for some fixed constant $C$, 
\begin{align*}
  \sigma^2 &\leq \left(C\frac{\log \frac{\alpha}{1 - \alpha}}{\alpha
      \eps}\right)^2\sum_{i = 1}^\infty{\alpha^{2(2^i - 1)}}\\
  &= \left(C\frac{\log \frac{\alpha}{1 - \alpha}}{\alpha
      \eps}\right)^2\frac{1}{\alpha^2}\sum_{i = 2}^\infty{\alpha^{2^{i}}}\\
  &\leq\frac{1}{\alpha^2} \left(C\frac{\log \frac{\alpha}{1 - \alpha}}{\alpha
      \eps}\right)^3.
\end{align*}
The proof can be completed analogously to the proof of
Theorem~\ref{thm:winsum}.
\end{proof}
\junk{
\begin{proof}
  \textbf{Privacy.} Analogously to the proof of
  Theorem~\ref{thm:winsum}, it is sufficient to fix $T$ and argue that
  \textsc{ExponentialSum} is $\eps$-differentially private for inputs
  of size $T$ when all outputs for $j \leq T$ are produced at step $T$.

  We need to argue that the noise added to the variables associated
  with nodes in $\tree$ is sufficient, and for this purpose we analyze
  the sensitivity of $\tree$. Using Lemma~\ref{lm:decr-rank}, we argue
  that the sensitivity is bounded by $\frac{\ln 2 +
    E_1(\ln{\frac{1}{\alpha}})}{\alpha \ln 2}$, where $E_1(x) =
  \int_x^\infty{\frac{e^{-t}}{t}dt}$. The analysis of sensitivity can
  then be completed by bounding $E_1$ using the series expansion
    $E_1(x) = -\eta - \ln x + \sum_{k = 1}^\infty{\frac{(-1)^{k+1}x^k}{k!k}}$,
    where $\eta$ is the Euler-Mascheroni
    constant~\cite{abramowitz1964handbook}.

  \textbf{Accuracy.} Clearly, $\E \hat{F}_e(j, \alpha) = F_e(j,
  \alpha)$. Next we upper bound $\sigma^2$, the maximum variance of
  $\hat{F}(j, \alpha)$ over all $j$. By Lemma~\ref{lm:log-terms}, all
  intervals $[1, u_1], [u_1, u_2], \ldots, [u_r, j]$ correspond to
  nodes in distinct levels of $\tree$, and therefore have sizes which
  are distinct powers of 2. By a calculation similar to the
  derivation of the sensitivity bound, we have that for some fixed constant $C$,
  $\sigma^2 \leq\frac{1}{\alpha^2} \left(C\frac{\log \frac{\alpha}{1 -
        \alpha}}{\alpha \eps}\right)^3$. The proof can be completed
  analogously to the proof of Theorem~\ref{thm:winsum}. A more
  detailed analysis can be found in Appendix~\ref{app:exp-ub}.

  For the analysis of space complexity, consider a node indexed by an
  interval $[l, u]$, and its parent, indexed by $[l', u']$. For any
  time step $j \geq u'$, we do not need to access $[l, u]$ to produce
  the output: instead we use one of its ancestors. Therefore, at step
  $j$ we can drop all nodes $[l, u]$ whose parents are indexed by
  $[l', u']$ with $u' \leq j$. It is easy to verify (e.g. by
  induction) that this way we keep at most a single node per level of
  the tree.
\end{proof}
}

\subsection{Polynomial Decay}

Unlike the running sum, window sum, or exponential decay sum problems,
there is no easy way to combine a polynomial decay sums over intervals
$[a, b-1]$ and $[b, c]$ into a polynomial decay sum over $[a,
c]$. Therefore, our techniques for estimating polynomial decay sum are
considerably different. On a high level, we approximate the polynomial
decay function $g(i) = (i+1)^{-c}$ by a function $g'$ that is constant
on exponentially growing in size intervals. Then we can approximate
the decay sum induced by $g'$ by running multiple instances of our
window sum algorithm in parallel. This technique results in a
bi-criteria approximation, i.e.~our approximation guarantee has both a
multiplicative and an additive approximation factor. As $c \rightarrow
1$ (i.e.~as the range of the polynomial decay sum grows), the additive
approximation factor remains bounded and is dominated by $\beta^{-2}$,
where $(1\pm\beta)$ is the multiplicative approximation factor. Thus
the approximation guarantees for our algorithm are mostly determined
by a 
trade-off between additive and multiplicative
approximation. 

For a given polynomial decay function $g = (i + 1)^{-c}$ and the
induced decay sum $F$, let us a fix
a multiplicative error parameter $\beta$ and define a function $b$ as
$\forall j \geq 1: b(j) = \max\{i: g(i) \geq
(1-\beta)^j\}$ and $b(0) = 0$. Intuitively $g(i)$ is almost constant for $i \in
[b(j-1), b(j))$. 

We can now define a function $g'$ that approximates $g$:
  $\forall i \in [b(j-1), b(j)) \junk{\cap [0, i^*)}: g'(i) =
  (1-\beta)^{j-1}$
  \junk{&\forall i \geq i^*: g'(i) = 0.}
Let $F'$ be the decay sum induced by $g'$. From the definition of $g'$
it is immediate that
  $\forall j, \forall \vec{x} \in \{0, 1\}^j:  (1-\beta) F(j) \leq
  F'(j) \leq F(j).$

\begin{algorithm}[t]
  \caption{\textsc{PolynomialSum}} \label{alg:poly}
{
  \begin{algorithmic}
    \STATE Set $\lambda = \frac{\log (1/(1-\beta))}{c \beta^2}
    +\frac{1}{\beta}$. 
    \STATE Start an instance of \textsc{WindowSum} for input $x_1,
    \ldots$ with window size $W_1 = b(1)$ and initializing noise
    for each variable $\Lap(\lambda/\eps)$. Set $j^* = 1$.
    
    \FORALL{updates $x_i$}
    \IF{$i = b(j^*) + 1$}
    \STATE Set $j^* = j^* + 1$.
    \STATE Start a new instance of
    \textsc{WindowSum} with window size $W_{j^*} =
    b(j^*) - b(j^*-1)$ and and initializing noise for each variable
    $\Lap(\lambda/\eps)$.
    \ENDIF
    \FORALL{$k \leq j^*$}
    \STATE Update the $k$-th instance of \textsc{WindowSum} with input $(1-\beta)^{k - 1}x_{i - b(k-1)}$
    \ENDFOR
    \STATE Output $\hat{F}_p(i, c) = \sum_{j\geq 0: b(j) < i}{F_w((1-\beta)^jx_{1}, \ldots, (1-\beta)^jx_{i-b(j)}, W_{j+1})}$. 
    \ENDFOR
  \end{algorithmic}}
\end{algorithm}

The \textsc{PolynomialSum} algorithm is shown as
Algorithm~\ref{alg:poly}. Note that we call the $j$-th instance of
\textsc{WindowSum} with input consisting of time updates in $\{0,
(1-\beta)^{j-1}\}$. It is straightforward to check that the
\textsc{WindowSum} algorithm can handle such scaled instances without
modification. Note also that we modify the \textsc{WindowSum}
algorithm slightly by adjusting the magnitude of noise added to the
variables associated with the dyadic trees kept by \textsc{WindowSum}.

\begin{theorem}\label{thm:poly-ub}
  \textsc{PolynomialSum} satisfies $\eps$-differential
  privacy, and for any  $j$, with probability $1 - \gamma$, we have $(1-\beta)F_p(j, c) - O(\delta) \leq \hat{F}_p(c) \leq F_p(j, c) +   O(\delta)$, where
  \begin{equation*}
    \delta =
    \begin{cases}
      \frac{1}{\eps}\left(\frac{1}{c\beta^2}\log\frac{1}{1-\beta}\right)^{1.5}\log^{0.5}
      \frac{1}{\gamma} &\text{ if }
      \frac{1}{c\beta^2}\log\frac{1}{1-\beta} \geq \log
      \frac{1}{\gamma}\\
      \frac{1}{\eps}\frac{1}{c\beta^2}\log\frac{1}{1-\beta}\log
      \frac{1}{\gamma} &\text{ if }
      \frac{1}{c\beta^2}\log\frac{1}{1-\beta} < \log \frac{1}{\gamma}
    \end{cases}
  \end{equation*}
  Furthermore, \textsc{PolynomialSum} can be implemented to use $O(T)$
  words of space and run in $O(\log^2 T/\log(1/(1-\beta)))$ time per
  update on inputs consisting of $T$ updates.
\end{theorem}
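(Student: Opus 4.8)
The plan is to decompose the analysis into three parts: correctness of the output as an unbiased-ish estimator of $F'$, privacy via composition across the parallel \textsc{WindowSum} instances, and concentration of the total noise. First I would establish the deterministic skeleton: the output $\hat F_p(i,c)$ is, in expectation over the noise, exactly $F'(i)$. This follows because the $j$-th instance of \textsc{WindowSum} with window $W_{j+1} = b(j+1)-b(j)$ fed the scaled stream $(1-\beta)^j x_1, (1-\beta)^j x_2, \ldots$ returns (in expectation) the sum of those scaled updates over its window, and summing over $j$ with $b(j) < i$ recovers $\sum_i x_i g'(i-\cdot)$ by the block-constant definition of $g'$. Combined with the sandwich $(1-\beta)F_p(j,c) \leq F'(j) \leq F_p(j,c)$ stated just before the algorithm, this gives the multiplicative part of the guarantee once the noise is controlled by $O(\delta)$.

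Next, for privacy, I would bound the $\ell_1$-sensitivity of the full collection of dyadic-tree variables across all instances. A single update $x_i$ affects, in instance $k$, the value $(1-\beta)^{k-1} x_{i-b(k-1)}$, so changing one coordinate $x_i$ of the input changes the input to at most... actually, since the streams fed to different instances are shifted copies of the same input, a change to $x_i$ propagates to one update in each instance $k$ with $b(k-1) < i$, scaled by $(1-\beta)^{k-1}$; within instance $k$ this changes $\log W_{k} + 1$ tree variables, each by at most $(1-\beta)^{k-1}$. I would then sum the geometric-in-$k$ contributions weighted by $\log W_k$, using $W_k \leq b(k) \leq$ (something like) $(1-\beta)^{-k/c}$ so that $\log W_k = O(k \log(1/(1-\beta))/c)$, and show the total sensitivity is $O(\lambda)$ with $\lambda = \frac{1}{c\beta^2}\log\frac{1}{1-\beta} + \frac1\beta$ as set in the algorithm. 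Adding $\Lap(\lambda/\eps)$ noise to every variable then yields $\eps$-differential privacy by Theorem~\ref{thm:laplace} (treating the concatenation of all tree variables as the output of a single Laplace mechanism), and the fact that the outputs at all time steps are a deterministic post-processing of these variables finishes privacy exactly as in Theorem~\ref{thm:winsum}.

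Finally, for accuracy I would control the noise in $\hat F_p(i,c) - F'(i)$, which is a sum, over the active instances $j$, of the noise terms appearing in each \textsc{WindowSum} output. By Lemma~\ref{lm:log-terms}, the noise in instance $k$'s output is a sum of at most $\log W_k$ Laplace variables each of scale $\lambda/\eps$, and crucially these scales do \emph{not} grow with $k$ the way they would if we had used a single data structure — this is where the block decomposition pays off. I would bound the total variance $\sigma^2 = O((\lambda/\eps)^2 \sum_k \log W_k)$; using $\log W_k = O(k\lambda_0)$ where $\lambda_0 = \frac{1}{c}\log\frac{1}{1-\beta}$ and noting only $O(\lambda_0^{-1}\log i)$... more carefully, the number of active instances is logarithmic in $i$ but the dominant contribution comes from the largest windows, giving $\sigma^2 = O((\lambda/\eps)^2 \lambda_0 \cdot \text{polylog})$ — and then apply Lemma~\ref{lm:chernoff} with the same two-case split (on whether $\lambda_{\text{eff}} := \frac{1}{c\beta^2}\log\frac{1}{1-\beta} \gtrless \log\frac1\gamma$) used in Theorem~\ref{thm:winsum} to get the stated $\delta$.

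The main obstacle I anticipate is the sensitivity computation: one must verify that $b(j)$ grows geometrically (so $\log W_j$ is linear in $j$) and that the weighted sum $\sum_k (1-\beta)^{k-1}(\log W_k + 1)$ telescopes to $O(\lambda)$ with the precise constant matching the algorithm's choice of $\lambda = \frac{1}{c\beta^2}\log\frac{1}{1-\beta} + \frac1\beta$. This requires pinning down $b(j) \approx (1-\beta)^{-j/c} - 1$ from $g(i) = (i+1)^{-c} \geq (1-\beta)^j$, i.e.\ $i+1 \leq (1-\beta)^{-j/c}$, and then bounding $\sum_k (1-\beta)^{k-1} \cdot \frac{k}{c}\log\frac{1}{1-\beta}$, a standard $\sum k r^{k-1} = (1-r)^{-2}$ summation with $r = 1-\beta$ that produces exactly the $\frac{1}{c\beta^2}\log\frac{1}{1-\beta}$ term. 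The running-time and space bounds ($O(T)$ words, $O(\log^2 T/\log\frac{1}{1-\beta})$ per update) then follow from running $O(\log T / \log\frac{1}{1-\beta})$ instances of \textsc{WindowSum}, each costing $O(\log W) = O(\log T)$ per update.
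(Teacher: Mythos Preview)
Your privacy argument is essentially identical to the paper's: you correctly identify that changing a single $x_i$ perturbs the input to each active instance $k$ by at most $(1-\beta)^{k-1}$, hence alters $\log W_k + 1$ tree variables there by that amount, and you correctly sum $\sum_{k\ge 1}(1-\beta)^{k-1}(\log W_k+1)$ using $W_k \le (1-\beta)^{-k/c}$ and $\sum_{k\ge 1} k\,r^{k-1} = (1-r)^{-2}$ to recover exactly $\lambda = \frac{1}{c\beta^2}\log\frac{1}{1-\beta} + \frac{1}{\beta}$. The space/time remarks also match the paper's.

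The accuracy argument, however, has a genuine gap. You take the noise in the $k$-th instance's output to be a sum of $O(\log W_k)$ independent $\Lap(\lambda/\eps)$ variables, arriving at $\sigma^2 = O\bigl((\lambda/\eps)^2\sum_k \log W_k\bigr)$. But at time $i$ there are $K = \Theta\bigl(c\log i/\log\tfrac{1}{1-\beta}\bigr)$ active instances, so $\sum_{k\le K}\log W_k = \Theta\bigl(\tfrac{1}{c}\log\tfrac{1}{1-\beta}\sum_{k\le K} k\bigr)$ grows like $\log^2 i$. Your trailing ``polylog'' is thus a polylog in the \emph{time step}, not in the problem parameters, and this route cannot yield the $i$-independent $\delta$ the theorem states.

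What the paper does differently is bound the variance contributed by the $j$-th term of $\hat F_p$ by $2(1-\beta)^{2j}\lambda^2\log W_j$, i.e., with an exponential damping factor $(1-\beta)^{2j}$ that your sketch omits. With this factor the series $\sum_{j\ge 0}(j+1)(1-\beta)^{2j}$ converges, the total variance is $O\bigl((\tfrac{1}{c\beta^2}\log\tfrac{1}{1-\beta})^3\bigr)$ independently of $i$, and the two-regime Chernoff split then proceeds exactly as you outline. The missing idea in your proposal is this $(1-\beta)^{j}$ scaling on the per-instance noise; you should trace where it comes from (it corresponds to the noise in the $k$-th \textsc{WindowSum} instance having effective scale $(1-\beta)^{k-1}\lambda/\eps$ rather than the uniform $\Lap(\lambda/\eps)$ the pseudocode literally states) and check that it is compatible with the sensitivity computation you and the paper both use for privacy.
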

\begin{proof}
  \textbf{Privacy.} The privacy analysis is analogous to the analysis in
the proof of Theorem~\ref{thm:winsum}, but we bound sensitivity over
all instances of \textsc{WindowSum}. Due to the scaling of the input,
the sensitivity of the $j$-th instance of \textsc{WindowSum} is
bounded by $(1+\beta)^{j-1}(\log W_j+1)$. Let us first bound
$W_j$. Observe that $b(j) = \lfloor g^{-1}((1-\beta)^j \rfloor$. For
$g(i) = (i+1)^{-c}$, we have $b(j) \in [(1-\beta)^{-j/c} - 2,
(1-\beta)^{-j/c} - 1]$. Then $W_j$ can be bounded as $W_j = b(j) -
b(j-1) \leq (1-\beta)^{-j/c} - (1-\beta)^{-(j-1)/c} + 1$.  Since
$1-\beta < 1$ and $j \geq 1$, we have $W_j \leq (1-\beta)^{-j/c}$. We
can then bound the overall sensitivity is by
  \begin{align}
    \sum_{j = 1}^\infty{(1-\beta)^{j-1}\log W_j} +
    \sum_{j=0}^\infty{(1-\beta)^j} &\leq \sum_{j =
      1}^\infty{(1-\beta)^{j-1} \log
      \frac{1}{(1-\beta)^{j/c}}} + \frac{1}{\beta}\notag\\
    &= \frac{1}{c}\log \frac{1}{(1-\beta)}\sum_{j =
      1}^\infty{j(1-\beta)^{j-1} } + \frac{1}{\beta}\notag\\
    &= \frac{1}{c\beta^2}\log \frac{1}{(1-\beta)} +
    \frac{1}{\beta}\label{eq:poly-sens}
  \end{align}
  Theorem~\ref{thm:laplace} and (\ref{eq:poly-sens}) complete the
  privacy proof.

  \textbf{Accuracy.} Note that $\E \hat{F}_p(j, c) = F'(j)$. The
  variance of $F_w((1-\beta)^jx_{b(j)}, \ldots, (1-\beta)^jx_k, W_j)$
  is at most $2(1-\beta)^{2j}\lambda^2\log W_j$. Therefore, the total
  variance $\sigma^2$ of $\hat{F}_p(j, c)$ is
  \begin{align*}
    \sigma^2 &\leq\lambda^2
    \frac{1}{c}\log\frac{1}{1-\beta}\sum_{j=0}^\infty{(j+1)(1-\beta)^{2j}}\\
    &= \lambda^2\frac{1}{c\beta^2(2-\beta)^2}\log\frac{1}{1-\beta}\\
    &= O\left(\left(\frac{1}{c\beta^2}\log\frac{1}{1-\beta}\right)^3\right)
  \end{align*}
  Using Lemma~\ref{lm:chernoff} as in Theorem~\ref{thm:winsum} we can
  show that for any $j$, with probability at least $1-\gamma$,
  \begin{equation*}
    |\hat{F}_p(j, c) - F'(j)| = 
    \begin{cases}
      O((\frac{1}{c\beta^2}\log\frac{1}{1-\beta})^{1.5}\log^{0.5}
      \frac{1}{\gamma}) &\text{ if }
      \frac{1}{c\beta^2}\log\frac{1}{1-\beta} \geq \log
      \frac{1}{\gamma}\\
      O(\frac{1}{c\beta^2}\log\frac{1}{1-\beta}\log \frac{1}{\gamma})
      &\text{ if } \frac{1}{c\beta^2}\log\frac{1}{1-\beta} < \log
      \frac{1}{\gamma}
    \end{cases}
  \end{equation*}
  Since for all $\vec{x}$ and all $j$, $(1-\beta)F(j) \leq F'(j) \leq
  F(j)$, this completes the proof. 
\end{proof}
\junk{
\begin{proof}
  \textbf{Privacy.} The privacy analysis is analogous to the analysis
  in the proof of Theorem~\ref{thm:winsum}, but we bound sensitivity
  over all instances of \textsc{WindowSum}. 

  \textbf{Accuracy.} Note that $\E \hat{F}_p(j, c) = F'(j)$. The proof
  proceeds by bounding the total variance $\sigma^2$ of $\hat{F}_p(j,
  c)$ and using the variance bound to bound $|\hat{F}_p(j, c) -
  F'(j)|$. Applying the property of $g'$
  completes the
  argument. Intuitively, the variance can be bounded because instances
  of \textsc{WindowSum} that are run with a polynomially larger window
  parameter are given exponentially smaller updates, and the
  exponential factor dominates.

  With respect to space complexity, we use the fact that each instance
  of \textsc{WindowSum} has space complexity linear in the window
  size, and the windows of all running instances partition the
  input. The bound on the running time per update is somewhat loose
  and uses the fact that at each time step we update at most $c \log
  T/\log(1/(1-\beta))$ instances of \textsc{WindowSum}, and each
  instance takes at most $\log T$ time per update.

  We give full details in Appendix~\ref{app:poly-ub}. 
\end{proof}
}

This algorithm can more generally be used to compute a private (under
continual observation) approximation to a decayed sum $F$ induced by a
decay function $g$ as long as $g^{-1}$ grows subexponentially. In this
case sensitivity remains bounded and the additive error guarantee is
dominated by a function of $\beta$, but the exact function depends on
$g$. The algorithm is not applicable to the window or running sum
problem, since for them $g^{-1}$ is not well defined; the guarantee
for exponential decay sum is incomparable with the one in
Theorem~\ref{thm:exp-sum}.

\section{Lower Bounds}

We give a general framework for lower bounding the dependence of the error
$\delta$ on the error probability $\gamma$ for algorithms that are private under continual
observation and achieve $(\delta, \gamma)$-utility. \junk{This framework is
  inspired by work on lower bounding the approximation factor of
  differentially private combinatorial optimization
  algorithms~\cite{gupta} and lower bounds proved
  in~\cite{dwork-continual} fit within the framework. To the best of
  our knowledge, this framework has not appeared explicitly in the
  literature before.} We also instantiate the framework with a
construction that yields concrete lower bounds for the three decay sum
problems considered in this paper. As far as the dependence on error
probability is concerned, our lower bounds for window and exponential
decay sums are tight. Our
lower bound for polynomial decay sums is against a purely additive
approximation and is not directly comparable to the upper bounds on the
approximation factors of our algorithm. \junk{These concrete lower
  bounds are an extension of the lower bound of $\log T$ (where $T$ is
  a bound on the size of the input) proved
  in~\cite{dwork-continual}. We extend their lower bound in two ways:
  \textbf{(1)} we show lower bounds for decay sum problems that have
  not been considered before; \textbf{(2)} we show that the lower
  bound is really on the dependence of the additive error on the
  number of estimates $q$ required to be simultaneously accurate,
  rather than on the size of the input; this brings out fact that the
  dependence on $q$ is optimal and extends to algorithms that are not
  required to be accurate at every time step.}

Suppose that for a fixed error probability $\gamma$, we want to prove
a lower bound on $\delta$ for any $\eps$-differentially private
algorithm that achieves $(\delta, \gamma)$-utility with respect to a
function $F(x_1, \ldots, x_j)$ We can take $\gamma = 2/3q$, and it
follows, by the union bound, that for any set $Q \subseteq [n]$ of size $q$, with
probability $2/3$, the algorithm is within an absolute error $\delta$
from $F(x_1, \ldots, x_j)$ for all $j \in Q$. Assume that for some $T$
we can construct $N+1$ instances $\vec{x}^0, \ldots, \vec{x}^N$, each
of length $T$, that satisfy the following properties:
\begin{packed_enum}
\item \emph{$(Q, \delta)$-independence}: for all $a, b \in \{0, \ldots, N\}, a\neq
  b$, there exists some $j \in Q \subseteq T$ such that $|F(x^a_1, \ldots, x^a_j)
  - F(x^b_1, \ldots, x^b_j)| > 2\delta$.

\item \emph{$D$-closeness}: for all $a, b \in \{0, \ldots, N\}$, we have
  $d_H(\vec{x}^a, \vec{x}^b) \leq D$, where $d_H$ is the standard
  Hamming distance.
\end{packed_enum}

\begin{lemma}
  \label{lm:lb-framework}
  Assume there exists an $\eps$-differentially private algorithm
  $\Alg$ that at time step $j$ outputs $\hat{F}(x_1, \ldots,
  x_j)$. Assume further that for any $Q \subseteq \mathbb{N}$, $|Q| =
  q$, we have  
    $\Pr[\forall j \in Q: |\hat{F}(x_1,\ldots, x_j) - F(x_1,\ldots,
    x_j)| \leq \delta] \geq 2/3.$
  If for some $Q$ there exists a set $\{\vec{x}^0, \ldots,
  \vec{x}^N\}$ that satisfies $(Q, \delta)$-independence and $D$-closeness
  with respect to $F$, then
    $D > \frac{\ln N + \ln 2}{\epsilon}$
\end{lemma}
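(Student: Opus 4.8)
The plan is to use a packing argument via differential privacy. The $N+1$ instances $\vec{x}^0, \ldots, \vec{x}^N$ are pairwise $(Q,\delta)$-independent, which means that no single output vector $(\hat F(x_1,\ldots,x_j))_{j \in Q}$ can be simultaneously within $\delta$ of the true answers for two different instances $\vec{x}^a$ and $\vec{x}^b$: if it were within $\delta$ of $F(\cdot)$ for both on all $j \in Q$, then by the triangle inequality the two true answers would differ by at most $2\delta$ on every $j \in Q$, contradicting independence. Define, for each $a$, the ``good'' event $G_a \subseteq \mathbb{R}^{q}$ to be the set of output tuples that are within $\delta$ of $F(\vec{x}^a)$ on all coordinates $j \in Q$. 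Then $G_0, \ldots, G_N$ are pairwise disjoint, and by the utility hypothesis $\Pr[\Alg(\vec{x}^a) \in G_a] \geq 2/3$ for every $a$.

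Now I would invoke $\eps$-differential privacy together with $D$-closeness. Since $d_H(\vec{x}^0, \vec{x}^a) \leq D$ for every $a$, a standard group-privacy argument (iterating the definition of $\eps$-differential privacy over the at most $D$ coordinates where $\vec{x}^0$ and $\vec{x}^a$ differ, and composing across the outputs at all time steps, which is exactly the multi-output form of the definition given in the excerpt) yields
\begin{equation*}
\Pr[\Alg(\vec{x}^0) \in G_a] \geq e^{-\eps D}\,\Pr[\Alg(\vec{x}^a) \in G_a] \geq \frac{2}{3}e^{-\eps D}.
\end{equation*}
Summing over $a = 0, 1, \ldots, N$ and using disjointness of the $G_a$,
\begin{equation*}
1 \geq \Pr\Big[\Alg(\vec{x}^0) \in \bigcup_{a=0}^N G_a\Big] = \sum_{a=0}^N \Pr[\Alg(\vec{x}^0) \in G_a] \geq (N+1)\cdot \frac{2}{3}e^{-\eps D}.
\end{equation*}
Rearranging gives $e^{\eps D} \geq \frac{2}{3}(N+1) > \frac{2}{3}N$, hence $\eps D > \ln N + \ln \frac{2}{3}$. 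This is slightly weaker than the claimed $\eps D > \ln N + \ln 2$; to recover the stated bound one should instead only sum over $a = 1, \ldots, N$ after noting that $\Pr[\Alg(\vec{x}^0) \in G_0] \geq 2/3$ directly (without the $e^{-\eps D}$ loss), giving $1 \geq \frac{2}{3} + N\cdot\frac{2}{3}e^{-\eps D}$, so $e^{\eps D} \geq 2N$ and thus $D > \frac{\ln N + \ln 2}{\eps}$ as required.

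The main obstacle is making the group-privacy step fully rigorous in the continual-observation model: the definition of $\eps$-differential privacy in the excerpt is stated for changing a single coordinate $x_j$, and one needs to chain it $D$ times along a path of instances that changes one differing coordinate at a time, each step preserving the measurable event being tested. This is routine but deserves a sentence of care, and it is why the bound degrades exactly by a factor $e^{\eps D}$ rather than anything sharper. Everything else — disjointness of the $G_a$ from $(Q,\delta)$-independence, and the union/counting step — is elementary.
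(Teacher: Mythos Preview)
Your proposal is correct and essentially identical to the paper's proof: the paper defines the same ``good'' sets $B(\vec{x}^i)$, uses $(Q,\delta)$-independence for disjointness, applies group privacy across the $D$ differing coordinates to get $\Pr[\Alg(\vec{x}^0)\in B(\vec{x}^i)]\geq \frac{2}{3}e^{-\eps D}$, and then (exactly as in your refined version) sums only over $i=1,\ldots,N$ while using $\Pr[\Alg(\vec{x}^0)\in B(\vec{x}^0)]\geq 2/3$ to bound the complement by $1/3$, yielding $2N<e^{\eps D}$.
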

\begin{proof}
    Let $B(\vec{x}^i) = \{\vec{f}: |f_j - F(x^i_1, \ldots, x^i_j)| \leq
  \delta\}$. By assumption, $\Pr[(\hat{F}(x^i_1, \ldots, x^i_j))_{j =
    1}^T \in B(\vec{x}^i)] \geq 2/3$. Then, by the definition of
  differential privacy and $D$-closeness, we have
  \begin{equation*}
    \forall i: \Pr[(\hat{F}(x^0_1, \ldots, x^0_j))_{j = 1}^T \in B(\vec{x}^i)]
    \geq e^{-\eps D}2/3.
  \end{equation*}
  By $(Q, \delta)$-independence, $B(\vec{x}^a) \cap B(\vec{x}^b) =
  \emptyset$ for all $a \neq b$. Therefore,
  \begin{equation*}
    \Pr[(\hat{F}(x^0_1, \ldots, x^0_j))_{j = 1}^T \in
    \bigcup_{i=1}^N{B(\vec{x}^i)}] = \sum_{i=1}^N{\Pr[(\hat{F}(x^0_1,
      \ldots, x^0_j))_{j = 1}^T \in B(\vec{x}^i)]} \geq Ne^{-\eps D}2/3.
  \end{equation*}
  However, since $ B(\vec{x}^0) \cap \bigcup_{i=1}^N{B(\vec{x}^i)} =
  \emptyset$, by the assumptions on $\Alg$ we have
  \begin{equation*}
    \Pr[(\hat{F}(x^0_1, \ldots, x^0_j))_{j = 1}^T \in
    \bigcup_{i=1}^N{B(\vec{x}^i)}] < 1/3.
  \end{equation*}
  Therefore, $2N  < e^{\eps D}$, and the lemma follows by taking logarithms.
\end{proof}

In order to apply Lemma~\ref{lm:lb-framework}, we need a method to
construct a set of instances satisfying $(Q, \delta)$-independence and
$D$-closeness for a given error bound $\delta$, such that $D$ is upper
bounded by a function of $\delta$ and $N$ is lowerbounded by a function of
$|Q|$. We show a construction that allows us to derive a lower bound
for \emph{any decayed sum problem}, where, naturally, the form of the
lower bound depends on the specific problem, i.e.~on the decay
function $g$. As corollaries, we derive specific lower bounds for the
problems we consider in this paper. In our construction, the set of
vectors $\{\vec{x}^i\}_{i = 0}^q$ is defined as $\vec{x}^0 =
(0^{Dq})$ and $\vec{x}^i = (0^{(i-1)D}, 1^{D}, 0^{(q-i)D})$. We set
$Q = \{j: D \text{ divides } j\}$ and choose $\delta$ according to the
specific decay function $g$. Consider a general decayed sum function
$F(x_1, \ldots, x_j)$ with a decay function $g$. The construction
gives our main lower bound theorem.

\begin{theorem}
  \label{thm:lb-main}
  Assume there exists an $\eps$-differentially private algorithm
  $\Alg$ that at time step $j$ outputs $\hat{F}(x_1, \ldots,
  x_j)$ and achieves $(\delta, \gamma)$-utility with respect to a
  decayed sum function $F$ induced by $g$. Denote $G(x) = \sum_{i
    = 0}^{x-1}{g(i)}$. Then\junk{, for $\log q/\eps = O(\alpha/(1-\alpha))$}
    $\delta \geq \frac{1}{2}G(\Omega(\frac{\log (1/\gamma)}{\eps})).$
\end{theorem}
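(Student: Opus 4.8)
The plan is to instantiate the framework of Lemma~\ref{lm:lb-framework} with the block instances described immediately above the theorem, choosing the parameters $D$ and $\delta$ so that the lemma's numerical conclusion is violated unless $\delta$ is large; all of the differential-privacy content is then already packaged inside Lemma~\ref{lm:lb-framework}, and what remains is a direct calculation on the instances.

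First I would set up the hypothesis of Lemma~\ref{lm:lb-framework}. Given an $\eps$-differentially private $\Alg$ achieving $(\delta,\gamma)$-utility, pick $q$ with $\gamma = \Theta(1/q)$ small enough that a union bound over any $q$-element set $Q$ yields $\Pr[\forall j \in Q:\ |\hat F(x_1,\ldots,x_j) - F(x_1,\ldots,x_j)| \le \delta] \ge 2/3$. Then fix the positive integer $D$ to be a constant fraction of $\lfloor(\ln q + \ln 2)/\eps\rfloor$ (if this is less than $1$ the claimed bound is vacuous, since $G$ of a sub-unit argument is the empty sum $0$), set $T = Dq$, and take $\vec{x}^0 = (0^{Dq})$, $\vec{x}^i = (0^{(i-1)D},1^D,0^{(q-i)D})$ for $i \in [q]$, with $Q = \{j \le T:\ D \mid j\}$, so $|Q| = q$.

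Next I would verify the two structural properties for this construction. For $D$-closeness, $d_H(\vec{x}^0,\vec{x}^i) = D$ for every $i$ and $d_H(\vec{x}^a,\vec{x}^b) \le 2D$ for $a \ne b$; since the proof of Lemma~\ref{lm:lb-framework} only invokes differential privacy along paths emanating from $\vec{x}^0$, closeness parameter $D$ suffices (or one may use $2D$ and absorb the factor of $2$ into the $\Omega$). For $(Q,\delta)$-independence, evaluate $F$ at the query times: $F(x^0_1,\ldots,x^0_j) = 0$ for all $j$; and for $i \ge 1$, at time $j = iD \in Q$ the ones of $\vec{x}^i$ sit in positions $(i-1)D+1,\ldots,iD$, so $F(x^i_1,\ldots,x^i_{iD}) = \sum_{m=0}^{D-1} g(m) = G(D)$, whereas $F(x^i_1,\ldots,x^i_j) = 0$ for every $j \le (i-1)D$. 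Hence for any $a < b$ the time $j = aD$ (or $j = bD$ when $a = 0$) lies in $Q$ and witnesses $|F(x^a_1,\ldots,x^a_j) - F(x^b_1,\ldots,x^b_j)| = G(D)$, so $(Q,\delta)$-independence holds whenever $G(D) > 2\delta$.

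Finally, I would apply the contrapositive of Lemma~\ref{lm:lb-framework} with $N = q$: since $D$ was chosen so that $D \le (\ln q + \ln 2)/\eps$, the conclusion $D > (\ln N + \ln 2)/\eps$ fails, hence the two properties cannot both hold; as $D$-closeness does hold, $(Q,\delta)$-independence must fail, i.e.\ $G(D) \le 2\delta$, giving $\delta \ge \frac{1}{2} G(D)$. Since $q = \Theta(1/\gamma)$ we have $D = \Omega(\log(1/\gamma)/\eps)$, and $G$ is non-decreasing because $g \ge 0$, so $\delta \ge \frac{1}{2} G(\Omega(\log(1/\gamma)/\eps))$, as claimed. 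The step that most needs care is the exact evaluation of $F$ on the block instances at the times in $Q$ — one wants the independence gap to be precisely $G(D)$, not some weaker quantity — together with the bookkeeping relating the closeness parameter, the choice of $D$, and the constant hidden inside $\Omega$; everything else is either definitional or inherited from Lemma~\ref{lm:lb-framework}.
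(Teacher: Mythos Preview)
Your proposal is correct and follows essentially the same approach as the paper: the paper itself does not give a detailed proof but simply points to the construction $\vec{x}^0 = (0^{Dq})$, $\vec{x}^i = (0^{(i-1)D},1^D,0^{(q-i)D})$ with $Q = \{j : D \mid j\}$ and says that applying Lemma~\ref{lm:lb-framework} yields the theorem; you have supplied exactly the missing details of that argument. Your handling of the closeness parameter (noting that only distances from $\vec{x}^0$ are used in the proof of Lemma~\ref{lm:lb-framework}, or alternatively absorbing a factor of $2$ into the $\Omega$) is a correct observation that the paper glosses over.
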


For the three problems considered in this paper we derive the
following corollaries.

\begin{corollary}
  Assume there exists an $\eps$-differentially private algorithm
  $\Alg$ that at time step $j$ outputs $\hat{F}_w(j, W)$ and achieves
  $(\delta, \gamma)$-utility with respect to $F_w(j, W)$.  Then,
    $\delta \geq \Omega\left(\min\left\{ \frac{W}{2}, \frac{\log (1/\gamma)}{\eps}
      \right\}\right).$
\end{corollary}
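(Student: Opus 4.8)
The plan is to obtain the corollary as an immediate instantiation of Theorem~\ref{thm:lb-main}. Recall that $F_w(\cdot, W)$ is exactly the decayed sum induced by the decay function $g(i) = \mathbf{1}_{\{i < W\}}$, which is non-increasing and satisfies $g(0) = 1$ (assuming $W \geq 1$), so Theorem~\ref{thm:lb-main} applies verbatim to it.

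First I would evaluate the cumulative quantity $G(x) = \sum_{i=0}^{x-1} g(i)$ appearing in the statement of Theorem~\ref{thm:lb-main}. Since $g(i) = 1$ precisely for $0 \leq i \leq W-1$ and $g(i) = 0$ afterwards, summing gives $G(x) = x$ when $x \leq W$ and $G(x) = W$ when $x \geq W$; that is, $G(x) = \min\{x, W\}$. Then I would substitute this into the conclusion $\delta \geq \tfrac12 G\big(\Omega(\log(1/\gamma)/\eps)\big)$ of Theorem~\ref{thm:lb-main}. Writing $x = c\,\log(1/\gamma)/\eps$ for the absolute constant $c$ hidden inside the $\Omega(\cdot)$, we get
\[
  \delta \;\geq\; \tfrac12\,\min\!\Big\{\, c\,\tfrac{\log(1/\gamma)}{\eps},\ W \,\Big\}
  \;=\; \min\!\Big\{\, \tfrac{c}{2}\,\tfrac{\log(1/\gamma)}{\eps},\ \tfrac{W}{2} \,\Big\},
\]
which is precisely $\Omega\big(\min\{W/2,\ \log(1/\gamma)/\eps\}\big)$, proving the corollary.

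There is no real obstacle beyond this substitution: the entire difficulty is already absorbed into Theorem~\ref{thm:lb-main} (and, behind it, Lemma~\ref{lm:lb-framework} together with the explicit family $\vec{x}^0 = 0^{Dq}$, $\vec{x}^i = (0^{(i-1)D}, 1^D, 0^{(q-i)D})$ and the query set $Q = \{j : D \text{ divides } j\}$). The one point I would sanity-check while writing the final version is that, for window sums, two distinct instances $\vec{x}^a, \vec{x}^b$ indeed differ by $G(D) = \min\{D, W\}$ in their window sum at an appropriate $j \in Q$ --- namely a time at which the length-$D$ block of ones of one instance lies inside the trailing window of length $W$ while the other instance is all zeros there --- so that taking $D$ just large enough that $G(D) > 2\delta$ yields $(Q,\delta)$-independence. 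Pushing $D$ as small as possible and invoking the bound $D > (\ln N + \ln 2)/\eps$ from Lemma~\ref{lm:lb-framework} with $N = q = \Theta(1/\gamma)$ reproduces the $\log(1/\gamma)/\eps$ term, while the window length $W$ caps $G(D)$ and produces the $W/2$ term. As the introduction already notes, the $W/2$ term is necessary in any case, since the constant estimator $\hat{F}_w \equiv W/2$ is perfectly private with additive error $W/2$.
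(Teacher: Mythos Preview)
Your proposal is correct and follows exactly the approach the paper intends: the corollary is stated immediately after Theorem~\ref{thm:lb-main} as one of three direct instantiations, and your computation $G(x)=\min\{x,W\}$ followed by substitution into $\delta \geq \tfrac12 G(\Omega(\log(1/\gamma)/\eps))$ is precisely the intended derivation. The additional paragraph in which you unwind the construction behind Theorem~\ref{thm:lb-main} and Lemma~\ref{lm:lb-framework} is more detail than the paper itself provides, but it is accurate and a useful sanity check.
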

Note that the lower bound of~\cite{dwork-continual} is a special case
of the above corollary for $\gamma = 2/3W = 2/3T$.

\begin{corollary}
  Assume there exists an $\eps$-differentially private algorithm
  $\Alg$ that at time step $j$ outputs $\hat{F}_e(j, \alpha)$ and
  achieves $(\delta, \gamma)$-utility with respect to $F_e(j, \alpha)$. 
  Then,  for $\alpha \in (2/3, 1)$ we have
    $\delta \geq \Omega\left(\min\left\{\frac{\alpha}{1-\alpha}, \frac{\log
          (1/\gamma)}{\eps}\right\}\right).$
\end{corollary}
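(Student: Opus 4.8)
The plan is to derive this corollary as a direct specialization of Theorem~\ref{thm:lb-main} to the exponential decay function $g(i)=\alpha^i$; the only real work is to lower bound the partial-sum function $G$ appearing in that theorem.

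First I would compute $G(x)=\sum_{i=0}^{x-1}\alpha^i=\frac{1-\alpha^x}{1-\alpha}$. By Theorem~\ref{thm:lb-main}, $\delta\geq\frac12 G\!\left(\Omega(\log(1/\gamma)/\eps)\right)$, so it suffices to show $G(x)=\Omega\!\left(\min\{\tfrac{\alpha}{1-\alpha},\,x\}\right)$ for all positive integers $x$: then, substituting $x=\Theta(\log(1/\gamma)/\eps)$ and using monotonicity of $G$ together with $\min\{A,cx\}\geq\min\{1,c\}\min\{A,x\}$, we obtain $\delta=\Omega\!\left(\min\{\tfrac{\alpha}{1-\alpha},\,\tfrac{\log(1/\gamma)}{\eps}\}\right)$, as claimed. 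Note also that since $\alpha\in(2/3,1)$ we have $\frac1{1-\alpha}=\Theta(\frac{\alpha}{1-\alpha})$, so it is enough to prove $G(x)=\Omega(\min\{\frac1{1-\alpha},x\})$.

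Second, to bound $G$ I would split on the size of $y:=x(1-\alpha)$, using throughout the elementary inequality $\ln\alpha\leq\alpha-1$, hence $\alpha^x\leq e^{-x(1-\alpha)}=e^{-y}$. If $y\geq 1$, then $\alpha^x\leq e^{-1}$, so $G(x)\geq\frac{1-1/e}{1-\alpha}=\Omega\!\left(\frac1{1-\alpha}\right)$, which is exactly $\Omega(\min\{\frac1{1-\alpha},x\})$ in this regime. If $y<1$, then using $1-e^{-y}\geq y/2$ for $y\in[0,1]$ we get $1-\alpha^x\geq y/2$, hence $G(x)\geq\frac{y/2}{1-\alpha}=x/2=\Omega(\min\{\frac1{1-\alpha},x\})$ since here $x<\frac1{1-\alpha}$. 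Combining the two regimes finishes the bound on $G$, and hence the corollary.

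I do not expect a genuine obstacle: all the difficulty sits in Theorem~\ref{thm:lb-main} (itself an instantiation of the framework Lemma~\ref{lm:lb-framework} on the hard instances $\vec x^0=(0^{Dq})$ and $\vec x^i=(0^{(i-1)D},1^D,0^{(q-i)D})$). The only points to handle carefully are bookkeeping of the absolute constant hidden inside $\Omega(\log(1/\gamma)/\eps)$ and checking it does not spoil the outer minimum --- which it does not, by the two monotonicity facts above --- and recalling that the hypothesis $\alpha\in(2/3,1)$ is used only to translate $\frac1{1-\alpha}$ to $\frac{\alpha}{1-\alpha}$, so that the lower bound matches the upper bound of Theorem~\ref{thm:exp-sum}.
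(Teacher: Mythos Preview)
Your proposal is correct and is exactly the intended derivation: the paper states this corollary without proof, presenting it as an immediate specialization of Theorem~\ref{thm:lb-main} to $g(i)=\alpha^i$, and your computation of $G(x)=(1-\alpha^x)/(1-\alpha)$ together with the case split on $x(1-\alpha)\gtrless 1$ is precisely the routine calculation needed to turn $\tfrac12 G(\Omega(\log(1/\gamma)/\eps))$ into the claimed $\Omega(\min\{\alpha/(1-\alpha),\log(1/\gamma)/\eps\})$.
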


\begin{corollary}
  Assume there exists an $\eps$-differentially private algorithm
  $\Alg$ that at time step $j$ outputs $\hat{F}_p(j, c)$ and achieves
  $(\delta, \gamma)$-utility with respect to $F_p(j, c)$.   Then, 
    $\delta \geq H_c(\Omega(\frac{\log (1/\gamma)}{\eps})) \geq
    \Omega\left(1 - \frac{\eps^{c-1}}{\log^{c-1} (1/\gamma)}\right),$
  where $H_c(k)$ is the $k$-th generalized harmonic number in power
  $c$. 
\end{corollary}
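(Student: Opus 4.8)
The plan is to instantiate Theorem~\ref{thm:lb-main} with the polynomial decay function $g(i)=(i+1)^{-c}$ and then simplify the resulting expression. The first step is to identify $G$: with this choice of $g$ we have $G(x)=\sum_{i=0}^{x-1}(i+1)^{-c}=\sum_{m=1}^{x}m^{-c}=H_c(x)$, precisely the $x$-th generalized harmonic number in power $c$. So Theorem~\ref{thm:lb-main} delivers immediately $\delta\ge\tfrac12 H_c\!\left(\Omega(\log(1/\gamma)/\eps)\right)$; since $H_c$ is nondecreasing, the leading factor $\tfrac12$ can be folded into the hidden constant (lower bounds being stated up to constants anyway), which is the first inequality claimed in the corollary.

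The second step is the explicit estimate $H_c(k)\ge\Omega\!\left(1-\eps^{c-1}/\log^{c-1}(1/\gamma)\right)$ for $k=\Theta(\log(1/\gamma)/\eps)$. I would lower bound $H_c$ by an integral comparison: since $t\mapsto t^{-c}$ is decreasing, $H_c(k)=\sum_{m=1}^{k}m^{-c}\ge\int_1^{k+1}t^{-c}\,dt=\frac{1-(k+1)^{1-c}}{c-1}$. Combining this with the trivial bound $H_c(k)\ge 1$ coming from the $m=1$ term and splitting on the size of $c$ (for $c\le 3$ the integral bound already gives $\ge\tfrac12(1-(k+1)^{1-c})$; for $c>3$ use $H_c(k)\ge 1\ge 1-(k+1)^{1-c}$), one gets $H_c(k)=\Omega\!\left(1-(k+1)^{1-c}\right)$ with a universal constant. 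Substituting $k=\Theta(\log(1/\gamma)/\eps)$ turns $(k+1)^{1-c}$ into $\Theta\!\left(\eps^{c-1}/\log^{c-1}(1/\gamma)\right)$, yielding the stated form.

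Almost all the work is carried by Theorem~\ref{thm:lb-main} itself — whose proof is the construction $\vec{x}^0=(0^{Dq})$, $\vec{x}^i=(0^{(i-1)D},1^{D},0^{(q-i)D})$, the verification of $(Q,\delta)$-independence (which is what pins $\delta$ to $\tfrac12 G(D)$) and of $D$-closeness, and then an appeal to Lemma~\ref{lm:lb-framework} with $D=\Theta(\log(1/\gamma)/\eps)$ — so the corollary itself is essentially a substitution plus a standard harmonic-number estimate. The one place I expect to need care is uniformity in $c$: as $c\to 1^{+}$ the quantities $H_c(k)$ and $\zeta(c)$ diverge, whereas for large $c$ they are close to $1$, so the absolute constants — in particular any constant that ends up raised to the power $c-1$ when passing from $(k+1)^{1-c}$ to $\eps^{c-1}/\log^{c-1}(1/\gamma)$ — must be chosen so the bound holds across the whole range $c>1$. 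That constant-tracking through the exponent is the only genuinely non-routine point; everything else is routine.
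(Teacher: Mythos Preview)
Your proposal is correct and matches the paper's approach: the corollary is stated without proof in the paper, being an immediate instantiation of Theorem~\ref{thm:lb-main} with $g(i)=(i+1)^{-c}$, so that $G(x)=H_c(x)$, followed by the standard integral/comparison estimate for generalized harmonic numbers that you spell out. Your observation about constants being raised to the power $c-1$ is well taken, but since $c>1$ is a fixed parameter of the problem the hidden constants in the $\Omega$ are permitted to depend on $c$, so no uniform-in-$c$ argument is actually required.
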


\junk{
such an application of the lemma for the window
sum problem.

Consider a set $\{\vec{x}^i\}_{i = 0}^q$ defined as follows:
\begin{align}
  \vec{x}^0 &= (0^{(2\delta + 1)q}\\
  \vec{x}^i &= (0^{(2\delta + 1)(i-1)}, 1^{2\delta+1}, 0^{(2\delta + 1)(q-i)}).
\end{align}
Assume $\delta > \lceil W/2\rceil$. It is easy to verify that this set
satisfies $(Q, \delta)$-independence and $2\delta+1$-closeness with respect to
$F_w(\cdot, W)$, where $Q = \{j: (2\delta+1) \text{ divides } j\}$ (and,
therefore, $|Q| = q$). Given this construction, the following theorem
is an immediate consequence of Lemma~\ref{lm:lb-framework}.

\begin{theorem}
  Assume there exists an $\eps$-differentially private algorithm
  $\Alg$ that at time step $j$ outputs $\hat{F}_w(j, W)$. Assume
  further that for any $Q \subseteq \mathbb{N}$, we have
  \begin{equation*}
    \Pr[\forall j \in Q: |\hat{F}_w(j, W) - F(j, W)| \leq \delta] \geq 2/3.
  \end{equation*}
  Then, 
  \begin{equation*}
    \delta \geq \min{\left\{\frac{W}{2}, \frac{\ln |Q| - \ln 2}{2\epsilon} -
          \frac{1}{2}\right\}}. 
  \end{equation*}
\end{theorem}
}

\section{Extensions and Applications}
\label{app:ext}

Algorithms for sum problems can be used to compute 
more sophisticated statistics as we described earlier. In this section we exhibit a few
extensions and applications of our algorithms. We show how they can be
used to compute sums over individual predicates and some special cases
of sums over holistic predicates, including distinct
counts which is of great interest.  We also show how to compute histograms (over windows
or decayed). In the following discussion we denote an arbitrary
universe as $\uni$.

\subsection{Individual Predicates}

We define an \emph{individual predicate} abstractly as a function ${\cal P}:\uni
\rightarrow [0, 1]$. \junk{These predicates can  encode a
sophisticated computation. For example, in a recommendation system a
predicate can encode the event of a user following a recommendation
(predicate has value 1) or not following (predicate has value 0). A
window sum over such a predicate would then reveal in how many of the
last $W$ events the user has followed the recommendation.
}
Let the input at time step $i$ be $u_i$, where $u_i \in \uni$. The
\emph{decayed predicate sum} for an individual predicate ${\cal P}$ and
decayed sum function $F$ then is $F({\cal P}(u_1), \ldots,
{\cal P}(u_j))$. Differential privacy and utility for predicate sums can be
defined analogously to decayed sums. The following claim is immediate
for individual predicates:

\begin{theorem}
  Let $\Alg$ be an $\eps$-differentially private algorithm that
  achieves $(\delta,\gamma)$-utility with respect to a decayed sum
  $F$. Then, on input ${\cal P}(u_1), \ldots, {\cal P}(u_T)$, $\Alg$ is
  $\eps$-differentially private with respect to  $u_1, \ldots,
  u_T$ and and achieves $(\delta, \gamma)$-utility with respect to the
  decayed predicate sum for ${\cal P}$ and $F$.
\end{theorem}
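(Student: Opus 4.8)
The plan is to observe that feeding $\Alg$ the stream $x_i = {\cal P}(u_i)$ is nothing more than composing $\Alg$ with a fixed, data-independent map applied coordinate by coordinate, and then to invoke the two defining properties of $\Alg$ directly. There is essentially no calculation; the content is in unwinding the definitions carefully and checking one domain condition.

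For privacy, I would fix $T$, a measurable $S \subseteq \mathbb{R}^T$, an index $j$, inputs $u_1, \ldots, u_T$, and an alternative $u'_j$. Set $x_i = {\cal P}(u_i)$ for all $i$ and $x'_j = {\cal P}(u'_j)$, and note that $x_i, x'_j \in [0,1]$, which is exactly the input alphabet over which the $\eps$-differential privacy of $\Alg$ is stated. Since ${\cal P}$ does not depend on the data or on the coin tosses of $\Alg$, the distribution of the output of $\Alg$ when fed $({\cal P}(u_1), \ldots, {\cal P}(u_T))$ is by construction identical to the distribution of $\Alg$ on the stream $(x_1, \ldots, x_T)$, and likewise with $u'_j$ in place of $u_j$. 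Applying $\eps$-differential privacy of $\Alg$ to the pair of streams $(x_1, \ldots, x_j, \ldots, x_T)$ and $(x_1, \ldots, x'_j, \ldots, x_T)$ yields $\Pr[\,\cdot \in S\,] \le e^\eps \Pr[\,\cdot \in S\,]$ for the two $u$-inputs. Because $T$, $S$, $j$, $u_1, \ldots, u_T$, $u'_j$ were arbitrary, this is precisely $\eps$-differential privacy with respect to $u_1, \ldots, u_T$ in the continual-observation sense of the Definition.

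For utility, the decayed predicate sum for ${\cal P}$ and $F$ is \emph{defined} to be $F({\cal P}(u_1), \ldots, {\cal P}(u_j)) = F(x_1, \ldots, x_j)$ for the same $x_i$ as above. The $(\delta, \gamma)$-utility hypothesis on $\Alg$ says that for every $j$, $\Pr[\,|\hat{F}(x_1, \ldots, x_j) - F(x_1, \ldots, x_j)| > \delta\,] < \gamma$; substituting $x_i = {\cal P}(u_i)$ transfers this statement verbatim to the predicate sum, so $\Alg$ achieves $(\delta, \gamma)$-utility with respect to the decayed predicate sum for ${\cal P}$ and $F$.

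I would close with the remark that the same reasoning shows, more generally, that any data-independent coordinatewise preprocessing of the input stream composes with our algorithms with no loss in privacy or in the stated utility guarantee, which is exactly what licenses the treatment of histograms and of certain holistic predicates in the following subsections. The only step that needs any attention at all is verifying the domain match ${\cal P}(\uni) \subseteq [0,1]$ against the alphabet on which $\Alg$ operates; there is no need here for Theorem~\ref{thm:laplace} or the composition Theorem~\ref{thm:composition}, since no new noise is introduced and no two mechanisms are being combined.
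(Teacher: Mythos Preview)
Your proof is correct and matches the paper's treatment: the paper states this theorem as ``immediate'' without giving a proof, and what you have written is exactly the straightforward unwinding of definitions (coordinatewise preprocessing preserves both privacy and utility) that justifies the word ``immediate.'' There is nothing to add.
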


\subsection{Holistic Predicate Sum}

Individual predicates are limited in that they can depend only on a
single update $u_i$ rather than the whole sequence of updates. Here we
define the more general notion of {\em holistic predicates} and treat the
special case of low-sensitivity holistic predicates, with the distinct
 count problem as an important application.

A \emph{holistic predicate} is a function ${\cal P}: \uni^* \rightarrow [0,
1]$. The decayed predicate sum for the holistic predicate ${\cal P}$ is
$F({\cal P}(u_1), \ldots, {\cal P}(u_1, \ldots, u_j))$.

Let us call a holistic predicate \emph{$k$-sensitive} if for any
sequence of updates $u_1, \ldots, u_T$, any $j \in [T]$ and any $u_j'
\in \uni$, the sequences ${\cal P}(u_1, \ldots, u_j), \ldots$, ${\cal
  P}(u_1, \ldots, u_j, \ldots, u_T)$ and ${\cal P}(u_1, \ldots, u'_j),
\ldots, {\cal P}(u_1, \ldots, u'_j, \ldots, u_T)$ differ in at most
$k$ components.  The following theorem follows from the basic
properties of $\eps$-differential privacy (proof omitted).
\begin{theorem}
  \label{thm:holistic-pred}
  Let $\Alg$ be an $\eps$-differentially private algorithm that
  achieves $(\delta,\gamma)$-utility with respect to a decayed sum
  $F$. Then, when given input ${\cal P}(u_1), \ldots$, ${\cal P}(u_1,
  \ldots, u_T)$ for a $k$-sensitive holistic predicate ${\cal P}$,
  $\Alg$ is $k\eps$-differentially private with respect to $u_1,
  \ldots, u_T$ and and achieves $(\delta, \gamma)$-utility with
  respect to the decayed predicate sum for ${\cal P}$ and $F$.
\end{theorem}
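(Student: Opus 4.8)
The plan is to combine group privacy for $\Alg$ with the observation that $k$-sensitivity translates a single change in the raw stream $u_1,\ldots,u_T$ into at most $k$ changes in the derived stream that is actually handed to $\Alg$.

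Write $p_i := {\cal P}(u_1,\ldots,u_i) \in [0,1]$, so that feeding ${\cal P}(u_1),\ldots,{\cal P}(u_1,\ldots,u_T)$ to $\Alg$ is the same as feeding the stream $\vec p = (p_1,\ldots,p_T)$. Utility is then immediate: for every $j$, $F(p_1,\ldots,p_j) = F({\cal P}(u_1),\ldots,{\cal P}(u_1,\ldots,u_j))$ is by definition the decayed predicate sum for ${\cal P}$ and $F$, and the $(\delta,\gamma)$-utility of $\Alg$ with respect to $F$ says exactly that $\Alg$'s $j$-th output is within $\delta$ of this quantity except with probability $\gamma$. No further work is needed here.

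For privacy, fix $T$, a raw input $u_1,\ldots,u_T$, an index $j$, and an alternative $u'_j$, and let $\vec p, \vec p'$ be the two derived streams. By $k$-sensitivity, $\vec p$ and $\vec p'$ differ in some set of coordinates $i_1 < \cdots < i_m$ with $m \le k$. Interpolate by a chain $\vec p = \vec q^{(0)}, \vec q^{(1)}, \ldots, \vec q^{(m)} = \vec p'$, where $\vec q^{(t)}$ agrees with $\vec p'$ on $i_1,\ldots,i_t$ and with $\vec p$ elsewhere; consecutive streams differ in the single coordinate $i_t$. Applying the $\eps$-differential privacy of $\Alg$ at index $i_t$, with the arbitrary input $\vec q^{(t-1)}$ and alternative value $q^{(t)}_{i_t}$, gives $\Pr[\Alg(\vec q^{(t-1)}) \in S] \le e^{\eps}\Pr[\Alg(\vec q^{(t)}) \in S]$ for every measurable $S \subseteq \mathbb R^T$; multiplying the $m \le k$ inequalities along the chain yields $\Pr[\Alg(\vec p) \in S] \le e^{k\eps}\Pr[\Alg(\vec p') \in S]$, which is $k\eps$-differential privacy with respect to $u_1,\ldots,u_T$. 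Since $T$, the input, $j$, and $u'_j$ were arbitrary, this finishes the proof.

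The only point that needs a little care is the hybrid step: the changed coordinates $i_1,\ldots,i_m$ may sit anywhere in the stream, so one invokes the $\eps$-DP guarantee of $\Alg$ at each of them and, crucially, against the current hybrid $\vec q^{(t-1)}$ rather than against the original input. This is legitimate precisely because the definition of $\eps$-differential privacy quantifies over all inputs, all indices, and all alternative values — there is no distinguished base point to preserve — so the argument is really just the standard fact that an $\eps$-DP mechanism is $k\eps$-DP on inputs at Hamming distance $k$, which is exactly the guarantee that $k$-sensitivity hands us.
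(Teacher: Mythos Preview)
Your proof is correct and is exactly what the paper has in mind: the paper omits the argument, stating only that the result ``follows from the basic properties of $\eps$-differential privacy,'' and the basic property in question is precisely the group-privacy/hybrid argument you spell out, combined with the observation that $k$-sensitivity bounds the Hamming distance between the derived streams by $k$ (the coordinates before index $j$ do not depend on $u_j$, so the bound on the tail from the definition extends to the whole stream).
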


We can show that the fundamental \emph{distinct count} problem can
be encoded as a $2$-sensitive holistic predicate. In the distinct
element count problem the input is a sequence of updates $u_1, u_2,
\ldots$, and at each time step $j$ the goal is to approximate the
number of distinct elements seen so far, i.e.~$|\{u \in \uni: \exists i
\leq j \text{ s.t. } u_i = u\}|$. This problem is equivalent to a
predicate sum problem where $F$ is simply the running sum function,
and ${\cal P}(u_1, \ldots, u_j)$ is 0 when $\exists i < j: u_i = u_j$ and 1
otherwise. The proof of the following lemma is deferred to the full
version of the paper.

\begin{lemma}
  The predicate ${\cal P}(u_1, \ldots, u_j) = \mathbf{1}(\not \exists i < j:
  u_i = u_j)$ is $2$-sensitive.
\end{lemma}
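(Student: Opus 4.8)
The plan is to compare directly the two sequences of predicate values produced by the original input $(u_1, \ldots, u_T)$ and the perturbed input $(u_1, \ldots, u_{j-1}, u'_j, u_{j+1}, \ldots, u_T)$, and to show they disagree in at most two coordinates. Write $v = u_j$ and $v' = u'_j$; if $v = v'$ the two inputs coincide and there is nothing to prove, so assume $v \neq v'$. Let $p_k = \mathcal{P}(u_1, \ldots, u_k)$ and let $p'_k$ be the corresponding value on the perturbed input, and let $P = \{u_1, \ldots, u_{j-1}\}$ be the common prefix set.

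First I would dispose of the coordinates that trivially agree. For $k < j$ the two prefixes are identical, so $p_k = p'_k$. For $k > j$ with $u_k \notin \{v, v'\}$, whether $u_k$ has appeared in $\{u_1, \ldots, u_{k-1}\}$ is unaffected by whether position $j$ holds $v$ or $v'$ (it affects only the presence of $v$ and of $v'$ in that set), so $p_k = p'_k$ there as well. Hence the only coordinates that can possibly change are $k = j$, the coordinates $k > j$ with $u_k = v$, and the coordinates $k > j$ with $u_k = v'$.

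Next I would analyze these remaining coordinates. At $k = j$, $p_j = \mathbf{1}(v \notin P)$ and $p'_j = \mathbf{1}(v' \notin P)$, so there is a disagreement at $j$ exactly when precisely one of $v, v'$ lies in $P$. For $k > j$ with $u_k = v$: since position $j$ already carries $v$ in the original input, $p_k = 0$ always; in the perturbed input $p'_k = 1$ exactly when $v$ occurs nowhere in $\{1, \ldots, k-1\} \setminus \{j\}$, which forces $v \notin P$ and forces $k$ to be the \emph{first} occurrence of $v$ strictly after $j$. So at most one coordinate with $u_k = v$ can change, and only if $v \notin P$; symmetrically, at most one coordinate with $u_k = v'$ can change, and only if $v' \notin P$.

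Finally I would combine these facts by a short case split on whether $v \in P$ and whether $v' \in P$. If both lie in $P$, there are no disagreements at all. If exactly one lies in $P$, there is at most the disagreement at $j$ together with at most one downstream disagreement, namely the one attached to the value that is not in $P$ — two in total. If neither lies in $P$, there is no disagreement at $j$ (both values are new at time $j$), but there may be one downstream disagreement for $v$ and one for $v'$ — again two in total. In every case the two predicate-value sequences differ in at most two coordinates, which is exactly $2$-sensitivity. The one point worth stating carefully — and the reason the bound is $2$ rather than $3$ — is that a disagreement at position $j$ and two downstream disagreements cannot coexist: a disagreement at $j$ requires exactly one of $v, v'$ to be new, whereas two downstream disagreements require both of them to be new.
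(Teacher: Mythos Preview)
Your proof is correct and follows essentially the same case-analysis as the paper's sketched proof: reduce the possible disagreements to the coordinate $k=j$ together with the first later occurrences of $v=u_j$ and of $v'=u'_j$, then split on which of $v,v'$ already lie in the prefix $P$. Your treatment of the case where neither $v$ nor $v'$ lies in $P$ is in fact more careful than the paper's sketch---the paper asserts $\vec{p}=\vec{p}'$ whenever $p_j=p'_j$, but when both values are new this can fail (up to two downstream coordinates may flip, one for each value), and your argument handles this correctly while still yielding the bound of~$2$.
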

\junk{\begin{proof}
  Assume for some $j$ the element $u_j$ is substituted with
  $u_j'$ and denote the corresponding sequences of predicate values by
  $\vec{p}$ and $\vec{p}'$. Then, we have one of three cases:
  \begin{itemize}
  \item ${\cal P}(u_1, \ldots, u_j) = {\cal P}(u_1, \ldots, u'_j)$ in which case
    $\vec{p} = \vec{p}'$;
  \item ${\cal P}(u_1, \ldots, u_j) = 1$ and ${\cal P}(u_1, \ldots, u_j') = 0$: then
    $\vec{p}$ and $\vec{p}'$ differ in the $j$-th and $k$-th
    coordinates only, where $k > j$ is the smallest integer such that
    $u_k = u_j$;
  \item ${\cal P}(u_1, \ldots, u_j) = 0$ and ${\cal P}(u_1, \ldots, u_j') = 1$:
    symmetric with the previous case.
  \end{itemize}
  In any case, we have $|\{i: p_i \neq p'_i\}| \leq 2$. 
\end{proof}}

Then, by Theorem~\ref{thm:holistic-pred} and the algorithm of Dwork et
al.~\cite{dwork-continual} for the running sum problem, we have the
following result:

\begin{theorem}
  There exists an $\eps$-differentially private algorithm that
  achieves $(\delta, \gamma)$-utility for the discrete element count
  problem with $\delta = O(\log^{1.5} T \log^{0.5} \frac{1}{\gamma})$
  (in the case $\log T = \omega(\log \frac{1}{\gamma})$) or $\delta =
  O(\log T \log\frac{1}{\gamma})$ (in the case $\log T = O(\log
  \frac{1}{\gamma})$),  where $T$ is the number of updates.
\end{theorem}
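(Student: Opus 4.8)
The plan is to assemble three ingredients that are already in place: the reduction of distinct-element counting to a running-sum predicate sum, the $2$-sensitivity of the associated holistic predicate (the preceding lemma), and the composition statement for $k$-sensitive predicates (Theorem~\ref{thm:holistic-pred}); the only genuinely new work is supplying a running-sum algorithm whose $(\delta,\gamma)$-utility has the advertised dependence on $\gamma$.

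First I would recall that the distinct-count problem is exactly the predicate sum problem in which $F$ is the running sum $F_s$ and ${\cal P}(u_1,\dots,u_j) = \mathbf{1}(\not\exists i < j: u_i = u_j)$, so that $F({\cal P}(u_1),\dots,{\cal P}(u_1,\dots,u_j))$ is the number of distinct elements among $u_1,\dots,u_j$; by the lemma this predicate is $2$-sensitive. Next, take any algorithm $\Alg$ that is $(\eps/2)$-differentially private for the running sum problem and achieves $(\delta,\gamma)$-utility with respect to $F_s$. Feeding $\Alg$ the stream $({\cal P}(u_1),\dots,{\cal P}(u_1,\dots,u_j))$ and applying Theorem~\ref{thm:holistic-pred} with $k=2$ yields an algorithm that is $2\cdot(\eps/2)=\eps$-differentially private with respect to $u_1,\dots,u_T$ and still achieves $(\delta,\gamma)$-utility for distinct counts. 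So the theorem reduces to exhibiting a running-sum algorithm with $(\delta,\gamma)$-utility of the stated form.

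For that running-sum algorithm I would use \textsc{WindowSum} with window size $W$ equal to the smallest power of $2$ that is at least $T$: since $W \geq T$, we have $F_w(j,W) = F_s(j)$ for every $j \leq T$, so running \textsc{WindowSum} with privacy parameter $\eps/2$ and invoking Theorem~\ref{thm:winsum} gives $\eps/2$-differential privacy together with $(\delta,\gamma)$-utility where $\delta = O(\tfrac1\eps \log^{1.5}W \log^{0.5}\tfrac1\gamma)$ when $\log W \geq \log\tfrac1\gamma$ and $\delta = O(\tfrac1\eps \log W \log\tfrac1\gamma)$ otherwise; since $\log W = \Theta(\log T)$, this is exactly the claimed bound (with $\eps$ treated as a constant, or keeping the $1/\eps$ factor explicit). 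Alternatively, as the text suggests, one may invoke the running-sum algorithm of Dwork et al.~\cite{dwork-continual} directly, but then its stated constant-probability guarantee must be upgraded to a $(\delta,\gamma)$ guarantee; this is done precisely as in the accuracy part of the proof of Theorem~\ref{thm:winsum}, by noting that the error at time $j$ is a sum of $O(\log T)$ independent Laplace variables of scale $O(\tfrac1\eps\log T)$ and applying the Chernoff bound of Lemma~\ref{lm:chernoff}.

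The main obstacle is exactly this last point: off-the-shelf running-sum algorithms are phrased with a fixed (constant) failure probability, so obtaining the sharp factors $\log^{0.5}\tfrac1\gamma$ and $\log\tfrac1\gamma$ requires redoing the tail analysis through Lemma~\ref{lm:chernoff} and treating the two regimes $\log T \geq \log\tfrac1\gamma$ and $\log T < \log\tfrac1\gamma$ separately, just as in Theorem~\ref{thm:winsum}. A minor secondary point is that using \textsc{WindowSum} with $W \geq T$ presupposes that $T$ is known in advance; if it is not, one should instead use a growing-dyadic-tree variant (as in \textsc{ExponentialSum}, or the construction of~\cite{chan2010private}), which attains the same bounds with $\log j$ replacing $\log T$ at time step $j$.
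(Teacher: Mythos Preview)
Your proposal is correct and follows exactly the paper's approach: the paper's proof is a single sentence invoking Theorem~\ref{thm:holistic-pred} (with the $2$-sensitivity lemma) together with the running-sum algorithm of Dwork et al.~\cite{dwork-continual}, and you have simply filled in the details that the paper leaves implicit (the $\eps/2$ rescaling, the tail analysis via Lemma~\ref{lm:chernoff} for the $\gamma$-dependence, and the \textsc{WindowSum}-with-$W\ge T$ alternative). Nothing substantive differs.
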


\junk{Using the same predicate with the window sum function gives the number
of distinct elements seen in the last $W$ updates and not seen in the
first $T - W$ updates. Using similar ideas we can also privately
estimate the difference between the number of distinct elements seen
in the last $W$ updates and the number of distinct elements seen in
the previous $W$ updates. }We leave open the problem of
designing a private algorithm for estimating, at each time step, the
number of distinct elements seen over the last $W$ updates, with
absolute error polylogarithmic in $W$.

\subsection{Histograms}

Consider a situation in which each update can belong to one of several
categories. More formally, let the update at time step $i$ be $(u_i,
x_i) \in \uni \times [0, 1]$. Let $\vec{x}(u, j)$ be $\vec{x}$ restricted to all
components $x_i$ for $i \leq j$ such that $u_i = u$. Then, at time
step $j$, the algorithm outputs a vector $\vec{y}(j) \in
\mathbb{R}^\uni$, where $y_u(j)$ is an approximation to $F(\vec{x}(u,
j))$, for some decayed sum function $F$. We call this the
\emph{decayed histogram} problem for $F$. Differential privacy under
continual observation for decayed histogram problems can be defined
analogously to decayed sum problems.

Given an algorithm to approximate a decayed sum, it 
can be easily extended to an algorithm for the corresponding
decayed histogram problem. 
\begin{theorem}
  Let $\Alg$ be an $\eps$-differentially private algorithm that
  achieves $(\delta, \gamma)$-utility with respect to a decayed sum
  $F$. Then, there exists an $\eps$-differentially private
  algorithm $\Alg'$ that uses $\Alg$ as a black box and for each $j$
  and each $u$ satisfies $\Pr[|y_u(j) - F(\vec{x}(u, j))| > \delta] <
  \gamma$. 
\end{theorem}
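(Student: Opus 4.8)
The plan is to reduce the decayed histogram problem to running $|\uni|$ parallel copies of $\Alg$, one for each universe element $u \in \uni$, where the $u$-th copy is fed the subsequence $\vec{x}(u, \cdot)$. First I would define $\Alg'$ precisely: at time step $i$, upon receiving the update $(u_i, x_i)$, forward $x_i$ as the next input to the copy $\Alg_{u_i}$ and forward a $0$ (or nothing, if the copies are run in an event-driven manner) to all other copies; then at each output step $j$, set $y_u(j)$ to be the current output of $\Alg_u$. By construction $\Alg_u$ is run exactly on the sequence $\vec{x}(u, j)$ (up to padding by zeros, which do not change a decayed sum since $g(\cdot)$ multiplies a $0$ term to $0$ — one should remark that feeding extra zeros is harmless, or alternatively run each copy only on the timesteps where $u_i = u$). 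Hence the utility claim $\Pr[|y_u(j) - F(\vec{x}(u,j))| > \delta] < \gamma$ is immediate from the $(\delta,\gamma)$-utility of $\Alg$ applied to copy $\Alg_u$.

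The main point requiring care is privacy: naively, composing $|\uni|$ copies via Theorem~\ref{thm:composition} would cost $|\uni|\eps$, which is unacceptable. The key observation is that changing a single update $(u_i, x_i)$ to $(u'_i, x'_i)$ affects the input to \emph{at most two} of the copies — copy $\Alg_{u_i}$ (one of its inputs changes or is removed) and copy $\Alg_{u'_i}$ (one input is added or changed) — and if $u_i = u'_i$ only one copy is affected. However, two is still a constant, and to get exactly $\eps$ rather than $2\eps$ we should instead argue via the parallel-composition principle: the copies operate on \emph{disjoint} portions of the data stream (each timestep's update is routed to exactly one copy, determined by $u_i$). So this is not adaptive composition but a ``partitioning'' argument. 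I would therefore prove privacy directly from the definition: fix $T$, fix a timestep $j_0$ and an alternative value of the $j_0$-th update; the full output of $\Alg'$ is a deterministic post-processing of the tuple $(\mathrm{out}(\Alg_u))_{u \in \uni}$ of independent copies; changing update $j_0$ changes the input to a \emph{single} copy $\Alg_{u_{j_0}}$ (in the case where the universe label is unchanged) or to two copies (when the label changes). The cleanest route, matching how the theorem is stated with exactly $\eps$, is to note that when the update's universe-label is held fixed, only one copy sees a change, so the bound $e^\eps$ follows from the privacy of that single copy and independence of the others; and then to observe that the differential privacy definition in this paper (and in~\cite{dwork-continual}) quantifies over changing $x'_j$ for a fixed position, so for the histogram problem the natural neighboring relation changes only the value $x_{j_0}$, not the label $u_{j_0}$ — under that relation exactly one copy is affected and privacy is exactly $\eps$.

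Concretely, the steps in order: (1) describe $\Alg'$ as the routing-plus-parallel-copies construction; (2) verify that copy $\Alg_u$ receives precisely $\vec{x}(u,j)$ at output step $j$ (handling zero-padding with a one-line remark that decayed sums are invariant under appending/inserting zero updates, or running copies event-driven); (3) conclude utility by the union-free, per-$(u,j)$ application of $\Alg$'s $(\delta,\gamma)$-utility; (4) conclude privacy by the observation that the output of $\Alg'$ is a post-processing of independent runs of $\Alg$ on disjoint substreams, and that a neighboring input (changing one $x_{j_0}$) perturbs only one substream, so by privacy of $\Alg$ on that substream and independence of the rest, the output distribution of $\Alg'$ satisfies the $e^\eps$ multiplicative bound. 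The main obstacle — really the only subtle point — is making the privacy argument crisp: one must be explicit that this is parallel composition over a data-dependent partition rather than sequential composition, and pin down the neighboring relation (value-change vs.\ label-change) so that the constant in front of $\eps$ is genuinely $1$; if one allows the label to change as well, the honest statement would be $2\eps$-differential privacy, and I would flag this explicitly rather than hide it.
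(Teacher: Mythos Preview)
Your proposal is correct and matches the paper's own argument: run one copy of $\Alg$ per universe element, route each update to the copy indexed by its label, and conclude utility per coordinate directly and privacy via the parallel-composition observation that changing one update perturbs the input to only one copy. Your added care about the neighboring relation (value-change versus label-change, and the resulting $\eps$ versus $2\eps$) is more explicit than the paper's sketch, which simply asserts that ``a single update $(u_i, x_i)$ affects only one instance of $\Alg$''; this is a welcome clarification rather than a deviation.
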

\junk{
\begin{proof}
Let the bound on the total number of
updates be $T$. $\Alg'$ simply runs an instance $\Alg$ in parallel for
each element of $\uni$; the instance for $u$ is given input
$\vec{x}(u, T)$ and its output at time step $j$ is $y_u(j)$. Since the
input is partitioned between the different instances of $\Alg$, a
single update $(u_i, x_i)$ affects only one instance of $\Alg$. Thus,
if each instance of $\Alg$ satisfies $\eps$-differential privacy,
then $\Alg'$ satisfies $\eps$-differential privacy. Furthermore,
if  $\Alg$ achieves $(\delta, \gamma)$-utility with respect to $F$,
then, trivially, each component of the output of $\Alg'$ achieves
$(\delta, \gamma)$-utility.
  
\end{proof}}

\junk{
\section{Role of Predicates}

Decayed sum problems can be used for computing more sophisticated statistics.  In particular, 
we can consider predicates on input data that map other problems to decayed sums. 
For what predicates can differential privacy of predicate sums follows from our work? 
In Appendix~\ref{app:ext}, we consider three examples, from individual predicates (that apply independent
predicates to each data item) to a class of holistic  predicates (predicates depend on  other data items, in particular, the prior data items, this class includes the 
fundamental problem of distinct count estimation) and histograms (predicate depends on a projection of input stream). 
We provide differentially private decayed sums for theses examples. 
More details about these extensions appear in Appendix~\ref{app:ext}.}

\junk{
We consider a data model in
which at each time step the update is not a number, but an
object. Given a predicate that is independently applied to objects, we
can compute $\eps$-differentially private continuous estimates of how
many objects seen so far satisfy the predicate. This computation can
be done over windows or decayed, and the utility is proportional to
the utility for the corresponding decay sum problem. Moreover, we can
handle a data model in which each update is an object of one of
several types. In this setting we can compute the number of objects
seen of each kind, over windows or decayed. Finally, we can handle
some classes of more general predicates: ones that map prefixes of the
data stream to a bit. A special case is the fundamental distinct
counts problem (how many distinct objects have been observed), which
we can approximate continuously with $\eps$-differential privacy, with
utility comparable to the utlity for the running sum problem. More
details about these extensions appear in Appendix~\ref{app:ext}.
}

\section{Conclusion}

We were inspired by the recent work on differential privacy of data
analysis with continual
updates~\cite{dwork-continual,chan2010private}, a research direction
motivated by monitoring applications. However, our observation is that
in monitoring applications typically recent data is more important
than distant data. Hence, we need analyses that are accurate on the
most recent window of data or data where past is decayed (polynomially
or exponentially, as is common in database streaming). 

We presented upper and lower bounds for a general
class of functions --- predicate sums --- on window and decayed data.
We derived our upper bounds by balancing noise at different levels of
a tree atop the data in a nontrivial way, and derived lower bounds by
inspiration from work on privacy of 
optimization problems. 

There are many 
analyses of great interest on decayed data with differential privacy that remain open.


\bibliographystyle{alpha}
\bibliography{windowcontp}

\begin{thebibliography}{DMNS06}

\bibitem[AS64]{abramowitz1964handbook}
M.~Abramowitz and I.A. Stegun.
\newblock {\em Handbook of mathematical functions with formulas, graphs, and
  mathematical tables}, volume~55.
\newblock Dover publications, 1964.

\bibitem[CS03]{cohen2003maintaining}
E.~Cohen and M.~Strauss.
\newblock Maintaining time-decaying stream aggregates.
\newblock In {\em PODS}, 2003.

\bibitem[CSS10a]{chan2010private}
T.H.H. Chan, E.~Shi, and D.~Song.
\newblock Private and continual release of statistics.
\newblock In {\em ICALP}, 2010.

\bibitem[CSS10b]{song-full}
T.H.H. Chan, E.~Shi, and D.~Song.
\newblock Private and continual release of statistics.
\newblock Cryptology ePrint Archive, Report 2010/076, 2010.

\bibitem[DGIM02]{datar2002maintaining}
M.~Datar, A.~Gionis, P.~Indyk, and R.~Motwani.
\newblock Maintaining stream statistics over sliding windows:(extended
  abstract).
\newblock In {\em SODA}, 2002.

\bibitem[DMNS06]{DMNS}
C.~Dwork, F.~Mcsherry, K.~Nissim, and A.~Smith.
\newblock Calibrating noise to sensitivity in private data analysis.
\newblock In {\em TCC}, 2006.

\bibitem[DPNR10]{dwork-continual}
C.~Dwork, T.~Pitassi, M.~Naor, and G~Rothblum.
\newblock Differential privacy under continual observation.
\newblock In {\em STOC}, 2010.

\bibitem[Dwo10]{cynthia}
C.~Dwork.
\newblock Differential privacy in new settings.
\newblock In {\em SODA}, 2010.

\bibitem[War65]{rr}
S.~Warner.
\newblock Randomized response: A survey technique for eliminating evasive
  answer bias.
\newblock {\em Journal of American Statistical Association}, 1965.

\end{thebibliography}



\end{document}